\newtheorem{lemma}{Lemma}
\newtheorem{theorem}{Theorem}
\newsavebox{\@brx}
\newcommand{\llangle}[1][]{\savebox{\@brx}{\(\m@th{#1\langle}\)}%
  \mathopen{\copy\@brx\kern-0.5\wd\@brx\usebox{\@brx}}}
\newcommand{\rrangle}[1][]{\savebox{\@brx}{\(\m@th{#1\rangle}\)}%
  \mathclose{\copy\@brx\kern-0.5\wd\@brx\usebox{\@brx}}}
\newcommand{\kket}[1]{
    \ensuremath{|{#1}\rrangle}
}
\newcommand{\bbraket}[1]{
    \ensuremath{\llangle{#1}\rrangle}
}
\DeclareMathOperator{\Var}{Var}
\crefname{section}{Section}{Sections}
\crefname{appendix}{Appendix}{Appendices}
\begin{document}

\title{Low-Resource Quantum Energy Gap Estimation via Randomization}

\author{Hugo Pages}
\email{hugo.pages@etu.unistra.fr}
\affiliation{%
    Graduate School of Engineering Science, The University of Osaka, 1-3 Machikaneyama, Toyonaka, Osaka 560-8531, Japan.
}
\affiliation{%
    Télécom Physique Strasbourg, Strasbourg University,
    300 Bd Sébastien Brant, Illkirch Graffenstaden, 67412, France.
}

\author{Chusei Kiumi}
\affiliation{%
    Center for Quantum Information and Quantum Biology, The University of Osaka, 1-2 Machikaneyama, Toyonaka, Osaka,  560-0043, Japan.
}%
\author{Yuto Morohoshi}
\affiliation{%
    Graduate School of Engineering Science, The University of Osaka, 1-3 Machikaneyama, Toyonaka, Osaka 560-8531, Japan.
}
\author{Bálint Koczor}
\affiliation{%
    Mathematical Institute, University of Oxford, Woodstock Road, Oxford OX2 6GG, United Kingdom.
}
\author{Kosuke Mitarai}
\affiliation{%
    Graduate School of Engineering Science, The University of Osaka, 1-3 Machikaneyama, Toyonaka, Osaka 560-8531, Japan.
}
\affiliation{%
    Center for Quantum Information and Quantum Biology, The University of Osaka, 1-2 Machikaneyama, Toyonaka, Osaka,  560-0043, Japan.
}
\email{mitarai.kosuke.es@osaka-u.ac.jp}

\begin{abstract}
    Estimating the energy spectra of quantum many-body systems is a fundamental task in quantum physics, with applications ranging from chemistry to condensed matter. Algorithmic shadow spectroscopy is a recent method that leverages randomized measurements on time-evolved quantum states to extract spectral information. However, implementing accurate time evolution with low-depth circuits remains a key challenge for near-term quantum hardware. In this work, we propose a hybrid quantum-classical protocol that integrates Time Evolution via Probabilistic Angle Interpolation (TE-PAI) into the shadow spectroscopy framework. TE-PAI enables the simulation of time evolution using shallow stochastic circuits while preserving unbiased estimates through quasiprobability sampling. We construct the combined estimator and derive its theoretical properties. Through numerical simulations, we demonstrate that our method accurately resolves energy gaps and exhibits enhanced robustness to gate noise compared to standard Trotter-based shadow spectroscopy. We further validate the protocol experimentally on up to 20 qubits using IBM quantum hardware. This makes TE-PAI shadow spectroscopy a promising tool for spectral analysis on noisy intermediate-scale quantum (NISQ) devices.
\end{abstract}

\maketitle

\section{Introduction}

Quantum many-body systems are governed by the principles of quantum mechanics. A fundamental aspect of understanding these systems is the characterization of their energy spectra. In particular, the differences between eigenvalues of the Hamiltonian, often referred to as energy gaps, play an important role in determining their physical properties \cite{sachdev2011quantum}. For instance, in quantum chemistry, excitation energies are directly related to molecular optical responses and chemical reactivity \cite{levine2009quantum}. However, accurately computing energy gaps remains a challenging task for classical computers, primarily because the dimension of the Hilbert space that must be taken into account grows exponentially with the system size \cite{Dirac1929}.
Quantum computers provide a promising alternative, as they are capable of simulating quantum systems efficiently with polynomially scaling resources with respect to the system size~\cite{Lloyd1996}---and indeed the earliest examples of quantum advantage are expected to be in quantum simulation~\cite{zimboras2025myths,babbush2025grand}.

One of the earliest algorithms developed for estimating energy spectra on a quantum computer is quantum phase estimation (QPE) \cite{Kitaev1995, NielsenChuang2010}. It is often considered the main method within the framework of fault-tolerant quantum computing, as it can extract eigenvalues of a Hamiltonian with high precision. However, QPE requires deep circuits with ancillary qubits and error correction, which are far beyond the capabilities of current noisy intermediate-scale quantum (NISQ) devices. The limited qubit counts, short coherence times, and high gate error rates in NISQ hardware make direct implementation of QPE infeasible in the near term, motivating the exploration of alternative approaches that are better suited to present-day hardware.

As such, researchers have proposed several quantum algorithms for estimating spectral properties, aiming for approaches that reduce circuit depth and resource requirements while remaining compatible with the constraints of current hardware.
One promising method among them is algorithmic shadow spectroscopy \cite{Chan2025AlgorithmicShadow}. In shadow spectroscopy, we apply the time evolution operator of a target Hamiltonian to qubits for various durations and perform randomized measurements at each time point to generate classical shadows. By collecting and analyzing these time-dependent signals, one can reconstruct spectral information through classical post-processing. A notable advantage of this method is that it requires only straightforward time evolution of the Hamiltonian without the need for ancilla qubits or controlled time evolution.
Because of its simplicity, shadow spectroscopy offers better compatibility with near-term quantum devices compared to more demanding approaches such as quantum phase estimation \cite{Kitaev1995, NielsenChuang2010}. Also, by leveraging the classical shadow framework \cite{huang2020predicting}, it can estimate spectral properties with relatively few measurement shots, efficiently extracting relevant information from limited data. This feature provides a potential advantage over variational approaches such as the variational quantum eigensolver (VQE), which often require extensive optimization and large numbers of measurements to achieve comparable accuracy. However, this method still relies on precise time evolution, which can make its implementation challenging depending on the capabilities of the quantum hardware. One approach to address this limitation is to apply quantum error mitigation techniques \cite{Quantum_Error_Mitigated_Classical_Shadows}, such as probabilistic error cancellation, zero-noise extrapolation, or symmetry verification \cite{Endo2018QEM}. These approaches, however, require a detailed characterization of the noise and often lead to an exponential increase in the number of measurements.

To overcome this issue and further reduce the quantum resource requirements for spectral estimation, we propose an algorithm that integrates the recently developed technique of time-evolution via probabilistic angle interpolation (TE-PAI) into the framework of algorithmic shadow spectroscopy \cite{Kiumi2025}.
TE-PAI is a stochastic method for implementing time evolution under a given Hamiltonian based on Trotter decomposition.
Instead of applying all gates in the Trotterized sequence, TE-PAI probabilistically samples a subset of rotation gates according to carefully chosen quasiprobability distributions \cite{Kiumi2025, Koczor2024ProbabilisticInterpolation,PRXQuantum.5.040352}.
This approach allows us to reproduce the effect of a long and accurate Trotter sequence using only a small number of quantum gates, significantly reducing circuit depth while preserving fidelity. Furthermore, TE-PAI requires only Pauli rotation gates with fixed angles of $\Delta$ and $\pi$, followed by measurement. This approach, similar to shadow spectroscopy, does not rely on ancillary qubits or controlled time evolution. The simplicity and fundamental similarity of these randomized techniques make them naturally compatible and well aligned for integration. Although there is a trade-off between the measurement overhead and the number of gates, it is well understood and can be controlled explicitly \cite{Kiumi2025}: Tuning the single parameter $\Delta$ allows a user to increase or decrease the expected circuit depth at the cost of increasing the measurement overhead. Thus, the significant advantage of shadow spectroscopy---efficiently extracting spectral information with fewer shots---alleviates the increased shot requirement of TE-PAI.

By combining TE-PAI with algorithmic shadow spectroscopy, we construct an effective protocol for estimating energy spectra. Both TE-PAI and classical shadows involve sampling from randomized procedures, that is, one over gate sequences and the other over measurement bases. A key theoretical contribution of this work is the formulation of the combined estimator, which involves a nested quasiprobability sampling structure. This analysis ensures that the full protocol yields unbiased estimators of spectral quantities and clarifies the statistical properties of the resulting estimates. 

Since our algorithm is theoretically shown to offer a significant reduction in quantum resource requirements, we further highlight its practicality through extensive numerical simulations, which form the main contribution of this work, and demonstrative experiments on existing quantum hardware. Thanks to the simplicity of its circuit construction, our algorithm has the potential to serve as a key enabler of practical quantum advantage for NISQ and early fault-tolerant quantum computers (early-FTQC)~\cite{Kim2023Utility}.

While ref.~\cite{Kiumi2025} focused on the unique advantages of TE-PAI for early fault tolerant machines, we demonstrate a direct implementation of our algorithm on real quantum hardware through the simulation of the transverse Ising model. We benchmark its performance against conventional algorithmic shadow spectroscopy on a system of 20 qubits. Our results confirm the practical viability of our approach on contemporary noisy quantum devices, highlighting its potential for simulating many-body quantum systems within the current NISQ era. \\

This manuscript is organised as follows. \Cref{sec:preliminaries} introduces the theoretical background. In \cref{sec:te-pai-shadows}, we present the TE-PAI shadow spectroscopy protocol and provide a formal analysis of its unbiasedness and variance properties. \Cref{sec:experiments} reports numerical simulations, and the following subsection reports experimental results on quantum hardware, benchmarking the proposed method against standard Trotter-based shadow spectroscopy. Finally, in \cref{sec:conclusion}, we conclude with a discussion of the results, practical implications, and future research directions. Additional technical details and proofs are provided in the appendices.

\section{Preliminaries}\label{sec:preliminaries}
In this section, we introduce the necessary background for understanding our approach. Throughout this paper, we write boldface for quantum channels acting on density matrices.
For a unitary $U$, the associated channel $\bm{U}$ has the action $\bm{U}(\rho)=U\rho U^{\dagger}$ for any density matrix $\rho$.
For a $d\times d$ matrix $\rho = \sum_{i,j}\rho_{ij}\ket{i}\bra{j}$, we also introduce a vectorized notation as $\kket{\rho}:=\sum_{i,j=1}\rho_{ij}\ket{i}\otimes\ket{j}.$
With this convention, a channel corresponding to a unitary $U$ can be written as $\kket{\bm{U}(\rho)}=(U\otimes U^{*})\kket{\rho}$.
We also introduce the Hilbert-Schmidt inner product between two operators $A$ and $B$ as $\bbraket{A|B}:=\mathrm{Tr}(A^{\dagger} B)$ in accordance with the Liouville space representation \cite{Gilchrist2005Vectorization}.

Our central goal is to enable the estimation of transition energy gaps from time-evolved classical shadow data using resource-efficient circuits optimized for the constraints of NISQ and early-FTQC quantum hardware.

\subsection{Energy Gap Estimation}
Consider an $N$-qubit Hamiltonian $H=\sum_{j=1}^{J} h_j P_j$, where each $P_j \in \{I,X,Y,Z\}^{\otimes N}$ is a Pauli string. The time evolution operator generated by $H$ for a duration $t$ is denoted by $U_{H,t}:=e^{-iHt}$. Letting $\{E_\alpha\}$ and $\{|E_\alpha\rangle\}$ denote the eigenvalues and eigenstates of $H$, respectively, we define the transition energy gap between levels $\alpha$ and $\beta$ as:
\[
    \Delta E_{\alpha\beta} := |E_\alpha - E_\beta|.
\]
This definition includes the spectral gap as a special case. When the eigenvalues are ordered as $E_0 \leq E_1 \leq \cdots$, the spectral gap usually refers to $E_1-E_0$, which plays an important role in physical phenomena such as phase transitions, thermalization timescales, and adiabatic evolution rates. In this work, we consider the more general problem of estimating transition energy gaps $\Delta E_{\alpha\beta}$, which appear as frequencies in dynamical signals. Which transition gaps are visible depends on the initial state and on the measured observable.

In practice, quantum algorithms for gap estimation rely on extracting dynamical information from a time-evolved quantum state. Suppose a system with an initial state $\ket{\psi_{\text{init}}} = \sum_{\alpha} c_\alpha \ket{E_\alpha}$ evolves under $H$ for time $t$ as $\ket{\psi(t)} = e^{-i H t} \ket{\psi_{\text{init}}}= \sum_{\alpha} c_\alpha e^{-i E_\alpha t} \ket{E_\alpha}.$
The expectation value of an observable $O$ at time $t$ becomes
\begin{align}\label{eq:time_signal}
    S(t)
    &:= \bra{\psi(t)} O \ket{\psi(t)}\nonumber
    \\
    &= \sum_{\alpha,\beta} c_\alpha^* c_\beta
    e^{i (E_\alpha - E_\beta) t} \bra{E_\alpha} O \ket{E_\beta}.
\end{align}
The oscillatory terms in $S(t)$ have frequencies equal to the energy differences $\Delta E_{\alpha\beta}$, and thus a Fourier transform of the time signal reveals peaks at these gaps.
The intensity of each peak is determined by the spectral weight $c_\alpha^*c_\beta\langle E_\alpha|O|E_\beta\rangle$. Therefore, a prominent peak need not correspond to an adjacent-level spacing, but rather to a transition with large weight for the chosen initial state and observable.

Exact determination of $\Delta E_{\alpha\beta}$ requires sufficiently precise and long time evolution to achieve the desired frequency resolution. On NISQ devices, however, implementing long and accurate time evolutions is challenging due to limited coherence times and gate imperfections.

\subsection{Time Evolution}
From \cref{eq:time_signal}, efficiently simulating the time evolved state $\kket{\rho(t)}:=\kket{\bm{U}_{H,t}(\rho_{\text{init}})}$ and estimating expectation values of many observables $O$ denoted as
\[S(t) =\bbraket{O|\rho(t)}\] 
are key components of energy gap estimation algorithms in quantum computing. Since our target is NISQ and early-FTQC devices, we focus on methods that can implement time evolution with low quantum resource requirements. 

One of the most straightforward approaches is Trotterization \cite{Lloyd1996,Suzuki1,Suzuki2}, which approximates $U_{H,t}$ as a product of exponentials of individual terms in the Hamiltonian. Let $K$ denote the number of Trotter steps and define the step size $\delta t := t/K$. The $K$-step Trotter circuit is denoted by $U_{H,t}^{(K)}$ and is given by \begin{equation}\label{eq:trotter_circuit}
    U_{H,t}^{(K)} := \left[\prod_{j=1}^J e^{-i h_jP_j \delta t}\right]^K
    = \left[\prod_{j=1}^J R_{P_j,\theta_j}\right]^K,
\end{equation}
where $\theta_j= 2h_j\delta t$ and $R_{P,\theta}:=\exp(-i\theta P/2)$ is a Pauli rotation gate. The expectation value obtained by evolving $\rho_{\rm init}$ with the $K$-step Trotter circuit $\langle O \rangle^{(K)}_{t} := \bbraket{O|\bm{U}_{H,t}^{(K)}(\rho_{\rm init})}$ approximates the true expectation value as $\epsilon_T:=\left| S(t) - \langle O \rangle^{(K)}_t \right|\le \mathcal{O}(t^2/K).$
Also, the sample complexity scaling as $\mathcal{O}\!\left( \epsilon_s^{-2} \right)$ is required to estimate $\langle O \rangle^{(K)}_{t}$ within an additive statistical error $\epsilon_s$.

In this work, we use the first-order Trotter formula as the base circuit for TE-PAI. Although higher-order Trotter--Suzuki formulas can reduce deterministic product-formula error, they also increase circuit complexity and are not always advantageous on noisy hardware. Recent analyses show that higher-order formulas become beneficial only when gate errors are sufficiently small, while lower-order formulas can be preferable at noise levels relevant to current devices~\cite{Avtandilyan2024}. In particular, Ref.~\cite{Lee2023} experimentally demonstrated that symmetric second-order Trotterization does not necessarily outperform first-order Trotterization on noisy quantum hardware. Since our hardware demonstrations are performed in a NISQ regime where accumulated gate noise and circuit depth are the dominant limitations, we adopt the first-order formula to keep the implementation simple and isolate the depth-reduction effect of TE-PAI. Optimizing TE-PAI over product-formula order is left for future work.
 
We can further reduce the circuit depth required for time evolution by employing the recently proposed method of time evolution via probabilistic angle interpolation (TE-PAI) \cite{Kiumi2025,Koczor2024ProbabilisticInterpolation}. This method is based on quasiprobability sampling \cite{Temme2017PEC,Endo2018PracticalQEM} which we summarize in \cref{app:quasiprobability}.

 \subsection{Time-Evolution by Probabilistic Angle Interpolation (TE-PAI)} \label{app:TE-PAI}
 This section briefly reviews the result on TE-PAI method proposed in Ref.~\cite{Kiumi2025}, including the asymptotic analysis on expected gate count and variance. TE-PAI method applies probabilistic angle interpolation (PAI) \cite{Koczor2024ProbabilisticInterpolation} to Pauli rotation gates.

For any $\theta\in(-\pi,\pi)$ and a fixed angle $\Delta\in[|\theta|,\pi)$, set $\phi=\mathrm{sign}(\theta)\Delta$. A superoperator representation of Pauli rotation $R_{P,\theta}=\exp(-i\theta P/2)$ admits the decomposition
\begin{align}
\label{eq:te-pai-decomp-prelim-app}
\bm{R}_{P,\theta}=a_1(\theta)\bm{R}_{P,1}+a_2(\theta)\bm{R}_{P,2}+a_3(\theta)\bm{R}_{P,3},
\end{align}
where $\bm{R}_{P,1},\bm{R}_{P,2},\bm{R}_{P,3}$ are superoperator representations of Pauli rotation gates $I,  {R}_{P,\phi},{R}_{P,\pi}$ and the analytic expression of the coefficients is explicitly given as follows:
\begin{align*}
    \begin{split}
        a_1(\theta) &=
        \csc \left(\frac{\Delta}{2}\right)
        \cos \left(\frac{|\theta|}{2}\right)
        \sin \left(\frac{\Delta}{2} - \frac{|\theta|}{2}\right),  \\
        a_2(\theta) &=
        \csc(\Delta) \sin(|\theta|),  \\
        a_3(\theta) &=
        -\sec \left(\frac{\Delta}{2}\right)
        \sin \left(\frac{|\theta|}{2}\right)
        \sin \left(\frac{\Delta}{2} - \frac{|\theta|}{2}\right).
    \end{split}
\end{align*}

Let $\gamma(\theta):=|a_1(\theta)|+|a_2(\theta)|+|a_3(\theta)|$ be the normalization factor. We define the random rotation channel outputting one of these three channels as $\bm{R}_{P,l}$, where $l\in\{1,2,3\}$ is sampled with probabilities $p_l(\theta):=|a_l(\theta)|/\gamma(\theta)$. Fixing $\Delta$ and applying the random-rotation construction independently to each Pauli-rotation gate $R_{P_j,\theta_j}$ in the circuit $U_{H,t}^{(K)}$ in Eq.~\eqref{eq:trotter_circuit}, yields a random TE-PAI circuit
\begin{equation}
\label{eq:def_random_te_pai_circuit}
\hat{\bm{U}}_{\bm{l},t}^{(K)}
:=\prod_{k=1}^{K}\prod_{j=1}^{J}\bm{R}_{P_j,l(j,k)},
\end{equation}
along with an accumulated classical weight
$
\Gamma_{\bm{l}}:=\Gamma \prod_{k=1}^{K}\prod_{j=1}^{J}\mathrm{sign}(a_{l(j,k)}),
$ where $\Gamma:=\left(\prod_{j=1}^{J}\gamma(\theta_j)\right)^K$.
Here, $l(j,k)\in\{1,2,3\}$ is sampled independently according to probability $p_{l}(\theta_j)$ and by definition, $\Gamma_{\bm{l}}^2=\Gamma^2$. 

Since this is quasiprobability decomposition by construction, the following random map is an unbiased estimator of the ideal rotation channel $\bm{R}_{P,\theta}$:
\begin{equation*}
\mathbb{E}\!\left[\gamma(\theta)\,\mathrm{sign}(a_l(\theta))\,\bm{R}_{P,l}\right]=\bm{R}_{P,\theta},
\end{equation*}

Subsequently, the TE-PAI channel \cref{eq:def_random_te_pai_circuit} along with classical weight form an unbiased estimator of the Trotterized time evolution channel as:
\begin{equation*}
\mathbb{E}\!\left[\Gamma_{\bm{l}} \hat{\bm{U}}_{\bm{l},t}^{(K)}\right]=\bm{U}_{H,t}^{(K)}.
\end{equation*}

  Let $\hat{o}_{\bm{l}}$ be a single-shot measurement estimator of the observable $O$ obtained by executing $\hat{\bm{U}}_{\bm{l},t}^{(K)}$ on $\rho_{\rm init}$. The multiplication of $\Gamma_{\bm{l}}$ in post-processing yields an unbiased estimator of the original Trotter expectation value:
\begin{equation}
\mathbb{E}\left[\Gamma_{\bm{l}}\hat{o}_{\bm{l}}\right]=\langle O\rangle^{(K)}_t\approx S(t) ,
\end{equation}
which approximates the true expectation value for sufficiently large $K$.
 
In implementation, each gate $R_{P_j,\theta_j}$ is replaced by one of $\{I,R_{P_j,\phi},R_{P_j,\pi}\}$ and sampling $I$ corresponds to skipping the gate. 

As $K$ increases, the rotation angle $\theta_j$ in each Pauli rotation becomes smaller, so the probability of sampling the identity gate approaches one. This mechanism probabilistically skips most operations, yielding a sampled circuit with depth $\nu_{\bm{l}} \ll JK$, much shallower than the original Trotter sequence. Consequently, the expected number of non-identity gates $\mathbb{E}[\nu_{\bm{l}}]$ scales linearly with $t$ and converges to a constant in the large-$K$ limit:
\begin{equation*}
    \mathbb{E}[\nu_{\bm{l}}]
    =
    \csc(\Delta)\,(3-\cos\Delta)\|H\|_1 t
    + O(1/K),
\end{equation*}
where $\|H\|_1 := \sum_j |h_j|$ is the $L^1$ norm of the Hamiltonian coefficients. In contrast, the number of gates in the corresponding Trotter circuit scales as $O(t^2)$ and diverges as $K \to \infty$.

This depth reduction comes at the cost of a statistical sampling overhead, captured by the accumulated weight $\Gamma$, which leads to a sample complexity scaling of $\mathcal{O}(\Gamma^2\epsilon_s^{-2})$. Although the overhead factor $\Gamma^2$ scales exponentially with the evolution time $t$, it converges to a constant in the large-$K$ limit:
\begin{equation*}
    \Gamma^2
    =
    \exp\!\left[
        4t\|H\|_1
        \tan\!\left(\frac{\Delta}{2}\right)
    \right]
    + O(1/K).
\end{equation*}
The parameter $\Delta$ therefore provides a configurable trade-off between circuit depth and statistical sampling cost, allowing the overhead to remain manageable.

The advantage of TE-PAI lies in its tunable parameter $\Delta$, which explicitly governs the trade-off between circuit depth and sampling overhead. In the large-$K$ limit, increasing $\Delta$ reduces the expected circuit depth but increases the sampling overhead; decreasing $\Delta$ has the opposite effect, lowering the sampling cost at the expense of deeper circuits. As remarked in Ref.~\cite{Kiumi2025}, by setting $\Delta$ to a sufficiently small value defined by $\Delta=2 \arctan \left(\frac{Q}{4\|H\|_1 t}\right)$, TE-PAI achieves a constant measurement overhead $\exp(Q)$ while the gate count scales as $\mathcal{O}\left(\frac{\|H\|^2_1 t^2}{Q}\right)$, which is comparable to the standard Trotter approach.

Furthermore, Ref.~\cite{Kiumi2025} demonstrated that TE-PAI is highly advantageous for the early-FTQC regime. Since the only non-Clifford operation required is $R_Z(\Delta)$, the protocol offers the potential to significantly reduce the number of T-gates needed to execute a single circuit instance.

As mentioned in the preceding subsection, we focus on the first-order Trotter formula in this work to keep the hardware implementation simple and to isolate the depth-reduction effect of TE-PAI. Nevertheless, the TE-PAI construction naturally extends beyond first order: since probabilistic angle interpolation is applied independently to each Pauli rotation, it can be applied gate by gate to any product-formula approximation, including higher-order Trotter--Suzuki formulas. The resulting estimator remains unbiased for the chosen product-formula channel, preserving its deterministic Trotter error while reducing the depth of each sampled circuit by skipping identity outcomes. For the symmetric second-order formula, the leading large-$K$ gate-count and sampling-overhead scalings remain the same as in the first-order case, while the product-formula error converges faster. For higher-order Suzuki formulas beyond second order, the construction remains valid but is not necessarily optimal, since negative or enlarged fractional time steps can increase the total absolute rotation angle controlling the TE-PAI overhead. A detailed discussion is given in Appendix~\ref{app:higher-order-te-pai}.

\subsection{Shadow Spectroscopy}
While the sampling overhead of TE-PAI is manageable, it becomes a critical bottleneck when scaling to a large number of observables. To circumvent this, we integrate the classical shadow framework \cite{huang2020predicting}; by employing randomized measurements, this approach allows us to estimate exponentially many observables simultaneously with only logarithmic sample scaling. In particular, we utilize the Pauli-basis variant, which is well-suited for near-term hardware and enables the efficient reconstruction of local Pauli observables.

Given a quantum state $\rho$, we apply single-qubit Clifford unitaries $U_1,\ldots,U_N$ independently and uniformly at random, then measure in the computational basis to obtain a bitstring $b=b_1\cdots b_N$.
From this outcome we construct a snapshot
\begin{align}
\label{eq:shadow-snapshot-prelim}
\kket{\hat{\rho}}=\bigotimes_{n=1}^{N}\left(3\,\bm{U}_n^{-1}\kket{b_n}-\kket{I}\right).
\end{align}
It follows that $\mathbb{E}[\kket{\hat{\rho}}]=\kket{\rho}$, so for any observable $O$ the quantity $\bbraket{O|\hat{\rho}}$ is an unbiased estimator of $\mathrm{Tr}(O\rho)$.
By repeating the procedure many times, one obtains snapshots that allow simultaneous estimation of a large set of observables with rigorous sample complexity guarantees \cite{huang2020predicting}. The key notion for analyzing the sample complexity is the shadow norm, which quantifies the variance of the estimator for a given observable.
 We define the squared shadow norm $\|O\|_{\text{sh}}^2$ as the maximum second moment of the estimator over all possible states:
\begin{equation}
    \|O\|_{\text{sh}}^2 := \max_{\rho} \mathbb{E} \left[\bbraket{O|\hat{\rho}}^2 \right],
\end{equation}
where the expectation $\mathbb{E}$ is taken over the random unitary transformations and measurement outcomes inherent to the shadow protocol.

For the specific case of Pauli-basis measurements and observables that are $q$-local Pauli strings, the squared shadow norm is given by $3^q$~\cite{huang2020predicting}. The same collection of classical shadow snapshots can be reused to estimate all observables in the target set $\{O_i\}_{i=1}^{N_o}$. Hence the sample complexity is controlled by the shadow norm for a single observable, while requiring all $N_o$ estimates to be accurate simultaneously introduces only a union-bound factor logarithmic in $N_o$~\cite{huang2020predicting, Chan2025AlgorithmicShadow}. The overall sample complexity is therefore
\[
    \mathcal{O}\!\left(3^q \epsilon^{-2}\log N_o\right).
\]

Finally, we employ the framework of algorithmic shadow spectroscopy \cite{Chan2025AlgorithmicShadow}. We consider a sequence of time points $\{t_n\}_{n=1}^{N_t}$. Using the classical shadow snapshots $\{\hat{\rho}(t_n)\}$ given by \cref{eq:shadow-snapshot-prelim} from the state $\rho(t_n)$ at each instance and a set of target observables $\{O_i\}_{i=1}^{N_o}$, we construct the estimated time series $\hat{S}_i(t_n) = \bbraket{O_i | \hat{\rho}(t_n)}$. This procedure yields a total of $N_oN_t$ data points. Since the above simultaneous-estimation guarantee applies at each time point, the total sample complexity scales as $\mathcal{O}(3^q N_t\epsilon^{-2}\log N_o)$ to estimate all $N_o$ time series. Following standardization and statistical filtering, we compute the cross-correlations of these signals and perform a Fourier transform; the resulting spectral peaks reveal the transition energy gaps of the Hamiltonian. A detailed description of the process is provided in Appendix \ref{sec:shadow-spectroscopy}.

\section{TE-PAI shadow spectroscopy}\label{sec:te-pai-shadows}
The central contribution of this work is a resource-efficient algorithm that integrates TE-PAI with algorithmic shadow spectroscopy. Explicitly designed for current and near-future quantum hardware, our protocol features a highly configurable trade-off between circuit depth and sampling overhead, minimizing quantum resource requirements while maintaining the flexibility to adapt to specific hardware constraints.

\subsection{Our protocol}
The following is the overview of the full procedure of TE-PAI shadow spectroscopy:
\begin{enumerate}
    
    \item \textbf{Step 1: Generate TE-PAI Circuits.}
    Set the total Trotter steps as $K_{\text{step}}=K \times N_t$ with positive integer $K$ and the number of time points $N_t$, and decompose the time evolution operator $e^{-iHt}$ into Trotter circuit $U_{H,t_s}^{(sK)}$ for each time point $t_s= s \frac{t}{N_t}$. For each gate in the sequence, probabilistically replace it with a fixed-angle rotation \{$0,\Delta,\pi$\} to generate $M_{\text{TE-PAI}}$ TE-PAI circuits along with classical weights $\left(\Gamma_{\bm{l}_m,t_s}, \hat{U}_{\bm{l}_m,t_s}^{(sK)}\right)$ according to \cref{eq:def_random_te_pai_circuit}.

    \item \textbf{Step 2: Obtaining Shadow Snapshots.}
   Apply random single-qubit Clifford unitaries $\{U_n\}_{n=1}^N$ to the state $\hat{U}_{\bm{l}_m,t_s}^{(sK)}(\rho_{\text{init}})$, then measure in the computational basis to obtain a bitstring $b=b_1\cdots b_N$. We then classically store the snapshot as $(\Gamma_{\bm{l},t_s}, \{U_n\}_{n=1}^N, \{b_n\}_{n=1}^N)$.
 
    \item \textbf{Step 3: Classical Signal Processing.}
    For $N_o$ observables $\{O_i\}_{i=1}^{N_o}$, compute estimates of time series signals $S_i(s)=\bbraket{O_i|\rho(t_s)}$ from estimator $\Gamma_{\bm{l_m},t_s} \bbraket{O_i|\hat{\rho}_{\bm{l}_{m}}(t_s)}$ which is constructed from the stored data according to \cref{eq:shadow-snapshot-prelim}. These time series are then processed, as described in Appendix~\ref{sec:shadow-spectroscopy} to extract the energy spectrum of the Hamiltonian.
\end{enumerate}

\begin{figure}[t]
    \begin{algorithm}[H]
        \caption{Obtaining a shadow snapshot of $e^{-iHt}\rho_{\text{init}} e^{iHt}$ using TE-PAI\label{alg:te-pai-shadow}}
        \begin{algorithmic}[1]
            \Statex \textbf{Input:} $N$-qubit Hamiltonian $H = \sum_{j=1}^J h_j P_j$, initial state $\rho_{\text{init}}$, total time $t$, number of time steps $K_{\rm steps}$, angle $\Delta$
            \Statex \textbf{Output:} Tuple $(\Gamma, \{U_n\}_{n=1}^N, \{b_n\}_{n=1}^N)$, where $\Gamma\in\mathbb{R}$ is the post-processing weight, $U_n$ is the local Clifford gate applied to qubit $n$, and $b_n\in\{0,1\}$. The shadow is constructed as $\Gamma \bigotimes_{n=1}^N \left[3 \bm{U}_n^{-1} \kket{b_n} - \kket{I}\right]$.
            \State Set $\delta = t / K_{\rm steps}$
            \State Initialize $\Gamma \leftarrow 1$
            \State Initialize an empty quantum circuit $\mathcal{C}$
            \For{$k = 1$ to $K_{\rm steps}$}
                \For{$j=1$ to $J$}
                    \State Compute $\theta=2h_j\delta$
                    \State Compute $a_1,a_2,a_3$ according to Eq.~\eqref{eq:te-pai-decomp-prelim-app}
                    \State Set $\gamma = |a_1|+|a_2|+|a_3|$, sample $d\in\{1,2,3\}$ with probability $|a_d|/\gamma$
                    \State Update $\Gamma \leftarrow \Gamma \times \gamma \times \mathrm{sign}(a_d)$
                    \State Append gate $R_{P_j,\phi}$ ($\phi=\mathrm{sign}(\theta)\Delta$ if $d=2$) or $R_{P_j,\pi}$ ($d=3$) to $\mathcal{C}$; if $d=1$, append nothing
                \EndFor
            \EndFor
            \For{$n = 1$ to $N$}
                \State Sample a Clifford gate $U_n$ uniformly at random and add to qubit $n$
            \EndFor
            \State Apply $\mathcal{C}$ to $\rho_{\rm init}$, measure in the computational basis to obtain $b=(b_1,\ldots,b_N)$
            \State \Return $\Gamma$, $\{U_n\}$, $\{b_n\}$
        \end{algorithmic}
    \end{algorithm}
\end{figure}

The detailed pseudocode for Step 1 and Step 2 is provided in Algorithm~\ref{alg:te-pai-shadow}.
In our actual implementation, the process for generating these snapshots slightly differs from a simple repetition of the fundamental algorithm described in Algorithm \ref{alg:te-pai-shadow}.
Specifically, our implementation generates one time-evolution circuit via the TE-PAI process and reuses it with $N_s$ different randomized measurement bases to obtain $N_s$ snapshots for each TE-PAI sampled circuits.
Consequently, if we generate $M_{\text{TE-PAI}}$ distinct time-evolution circuits via TE-PAI sampling, the total number of unique quantum circuits ultimately executed is $M_{\text{TE-PAI}} \times N_s$.
The detailed pseudocode for this implementation is provided in Appendix~\ref{app:experimental_algorithm} as Algorithm~\ref{alg:te-pai-shadow-experimental}.

This modification allows us to control the total number of circuit executions directly, while independently varying $M_{\text{TE-PAI}}$ and $N_s$. We show in the next subsection that this modified implementation preserves the unbiased nature of the estimator.

\subsection{Formal Analysis}\label{sec:formal_analysis}
In this subsection, we analyze the statistical properties of the estimator obtained from TE-PAI shadow spectroscopy. For generality, we state the result in a form that applies not only to TE-PAI but also to any quasiprobability sampling method. 

We use the notation from \cref{app:quasiprobability}, where we consider a sequence of channels $\bm{C}=\bm{C}_K\cdots \bm{C}_1$ with each $\bm{C}_k$ decomposed as $\bm{C}_k=\sum_{d=1}^D a_{k,d}\bm{C}_{k,d}$. Let $l_k\in \{1,\ldots,D\}$ be a random variable drawn according to $p_{d}=|a_{k,d}|/\gamma(\bm{C}_k)$, where $\gamma(\bm{C}_k)=\sum_d |a_{k,d}|$. 
Define the random channel $\hat{\bm{C}}_{\bm{l}} := \bm{C}_{K,l_K} \cdots \bm{C}_{2,l_2}\bm{C}_{1,l_1}$
along with an accumulated classical weight
$
\Gamma_{\bm{l}}:=\Gamma \prod_{k=1}^{K}\mathrm{sign}(a_{k,l_k}),
$ where $\Gamma:=\prod_{k=1}^{K}\gamma(\bm{C}_{k})$. We generate $M$ independent samples of classical weight and circuit denoted as $\Gamma_{\bm{l}_m},\ \hat{\bm{C}}_{\bm{l}_m}$ with $m\in\{1,\dots,M\}$. For each sampled channel applied on the initial state $\rho$, we obtain $N_s$ independent classical shadow snapshots $\hat{\rho}_{m,s}$ ($s\in\{1,\dots,N_s\}$) according to \cref{eq:shadow-snapshot-prelim}. We define the estimator $o_{m,s} := \Gamma_{\bm{l}_m} \bbraket{O | \hat{\rho}_{m,s}}$ and the overall estimator as
    \begin{align}\label{eq:overall-estimator}
        \langle \hat{O}\rangle & := \frac{1}{M N_s}\sum_{m=1}^{M}\sum_{s=1}^{N_s} o_{m,s}.
    \end{align}
Then, the following theorem shows that the estimator $\langle \hat{O}\rangle$ is an unbiased estimator of the true expectation value $\bbraket{O|\bm{C}(\rho)}$. For the specific case of TE-PAI, $\bm{C}$ corresponds to the Trotterized time evolution channel $\bm{U}_{H,t}^{(K)}$ and the true expectation value is $\langle O \rangle^{(K)}_{t} $.
\begin{theorem}[Unbiasedness]\label{thm:tepai_unbiased_shadow}
The estimator defined in \cref{eq:overall-estimator} is an unbiased estimator of the true expectation value:
    \begin{equation*}
        \mathbb{E}\!\left[\langle \hat{O}\rangle\right] =  \bbraket{O|\bm{C}(\rho)},
    \end{equation*}
where the expectation $\mathbb{E}[\cdot]$ is taken over both the quasiprobability sampling and the randomized measurements for classical shadow.
\end{theorem}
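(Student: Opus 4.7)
The plan is to establish unbiasedness via the tower property, separating the quasiprobability-sampling randomness from the classical-shadow randomness. By linearity of expectation, and since the $MN_s$ summands $o_{m,s}=\Gamma_{\bm{l}_m}\bbraket{O|\hat{\rho}_{m,s}}$ in \cref{eq:overall-estimator} are identically distributed, it suffices to show that a single term satisfies $\mathbb{E}[o_{m,s}]=\bbraket{O|\bm{C}(\rho)}$; I will drop the $m,s$ subscripts in what follows.

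First I would condition on the quasiprobability sample $\bm{l}$. Given $\bm{l}$, the state fed into the shadow subroutine is $\hat{\bm{C}}_{\bm{l}}(\rho)$, and by the standard unbiasedness of classical shadows the snapshot in \cref{eq:shadow-snapshot-prelim} satisfies $\mathbb{E}[\kket{\hat{\rho}}\mid\bm{l}]=\kket{\hat{\bm{C}}_{\bm{l}}(\rho)}$, so that $\mathbb{E}[\bbraket{O|\hat{\rho}}\mid\bm{l}]=\bbraket{O|\hat{\bm{C}}_{\bm{l}}(\rho)}$. This step relies on the random Clifford unitaries and measurement outcomes being drawn independently of $\bm{l}$, as guaranteed by the protocol; the same reasoning applies to all $N_s$ snapshots that share a common $\Gamma_{\bm{l}_m}$, since their conditional expectations coincide.

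Next I would take the outer expectation over $\bm{l}$ and pull it inside the linear functional $\bbra{O}$, reducing the task to verifying $\mathbb{E}[\Gamma_{\bm{l}}\hat{\bm{C}}_{\bm{l}}]=\bm{C}$. Expanding the definitions and using $p_d=|a_{k,d}|/\gamma(\bm{C}_k)$ together with $\Gamma=\prod_k\gamma(\bm{C}_k)$, the probability weights cancel the $\gamma$ factors and the signs combine with the absolute values to give
\begin{equation*}
\mathbb{E}[\Gamma_{\bm{l}}\hat{\bm{C}}_{\bm{l}}]=\sum_{\bm{l}}a_{K,l_K}\bm{C}_{K,l_K}\cdots a_{1,l_1}\bm{C}_{1,l_1}=\bm{C}_K\cdots\bm{C}_1=\bm{C},
\end{equation*}
where the second equality uses the independence of the layer labels $l_k$ to factorize the sum into a product of per-layer quasiprobability reconstructions $\sum_{d}a_{k,d}\bm{C}_{k,d}=\bm{C}_k$.

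The main obstacle is primarily bookkeeping rather than conceptual depth: one must carefully track the Liouville-space conventions and confirm the conditional-independence structure between the quasiprobability labels $\bm{l}$ and the shadow randomness. Once these are in place, the identity collapses directly from the definitions of $\Gamma_{\bm{l}}$ and $p_d$, and no property of $\bm{C}$ beyond its quasiprobability decomposition is used, which is precisely why the statement holds for generic quasiprobability samplers and not only for TE-PAI.
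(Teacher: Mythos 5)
Your proposal is correct and follows essentially the same route as the paper: the law of total expectation, conditioning first on the quasiprobability sample $\bm{l}$ to apply the standard shadow-snapshot unbiasedness, then using $\mathbb{E}_{\bm{l}}[\Gamma_{\bm{l}}\hat{\bm{C}}_{\bm{l}}]=\bm{C}$ for the outer expectation. The only difference is that you spell out the $\gamma$/sign cancellation and the per-layer factorization explicitly, whereas the paper cites this identity from its quasiprobability-sampling appendix; the substance is identical.
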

The detailed proof is provided in Appendix~\ref{app:proof_of_unbiasedness}. This demonstrates that the estimator remains unbiased regardless of $M$ and $N_s$.  The convergence to this true value is guaranteed by the law of large numbers.
    
The variance of the estimator is dominated by the statistical fluctuations inherent in the quasiprobability sampling.
    Increasing $N_s$ reduces the variance from the shadow measurements for a fixed evolution circuit, but it does not reduce the variance arising from the choice of different evolution circuits.
    The latter is suppressed only by increasing the number of quasiprobability samples, $M$.
    Therefore, by increasing $M$, the estimator is guaranteed to converge to the true expectation value.

\begin{theorem}[Variance of the overall estimator]\label{thm:tepai_shadow_variance}
The variance of this estimator is upper bounded as
\begin{align*}
    \operatorname{Var}[\langle \hat{O} \rangle] &\leq \Gamma ^{2}\left(\frac{\| O\|^2 _{\operatorname{sh}}-\| O\|^2}{M N_s} +\frac{\| O\|^2}{M}\right).
\end{align*}
For the specific case of Pauli-basis measurements and normalized $q$-local Pauli observables, this becomes
\begin{align*}
    \mathrm{Var}[\langle \hat{O} \rangle] \leq \Gamma ^{2}\left(\frac{3^q-1}{M N_s} +\frac{1}{M}\right).
\end{align*}
\end{theorem}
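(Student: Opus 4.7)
The plan is to exploit the two-level randomness (first the quasiprobability draw $\bm{l}_m$, then the shadow snapshots $\hat{\rho}_{m,s}$) by conditioning on the circuit samples and invoking the law of total variance. Since the $M$ circuit draws are i.i.d.\ across $m$ and the $N_s$ snapshots are i.i.d.\ across $s$ conditional on $\bm{l}_m$, the sample mean variance decomposes cleanly, and the two variance terms in the theorem will correspond exactly to (i) the residual shadow noise and (ii) the fluctuation of the conditional mean across circuit samples.

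First, I would use the independence across $m$ to write
\begin{equation*}
\operatorname{Var}\!\bigl[\langle\hat O\rangle\bigr]=\frac{1}{M}\operatorname{Var}\!\bigl[\bar Y_1\bigr],\qquad \bar Y_m:=\frac{1}{N_s}\sum_{s=1}^{N_s}o_{m,s}.
\end{equation*}
Next, by the law of total variance applied to $\bar Y_1$ with conditioning on $\bm{l}_1$, and using that the $N_s$ snapshots are conditionally i.i.d., I would obtain
\begin{equation*}
\operatorname{Var}[\bar Y_1]=\operatorname{Var}\!\bigl[X_1\bigr]+\frac{1}{N_s}\,\mathbb{E}\!\left[\operatorname{Var}\!\bigl[o_{1,1}\mid\bm{l}_1\bigr]\right],
\end{equation*}
where $X_m:=\Gamma_{\bm{l}_m}\bbraket{O|\hat{\bm{C}}_{\bm{l}_m}(\rho)}$ is the conditional mean, known to satisfy $\mathbb{E}[X_m]=\bbraket{O|\bm{C}(\rho)}$ by \cref{thm:tepai_unbiased_shadow}.

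The two terms are then bounded using the key fact $|\Gamma_{\bm{l}}|=\Gamma$ (from $\Gamma_{\bm{l}}^2=\Gamma^2$, which holds because the quasiprobability factor only contributes signs). For the inner variance, $\operatorname{Var}[o_{1,1}\mid\bm{l}_1]\le\mathbb{E}[o_{1,1}^2\mid\bm{l}_1]=\Gamma^2\,\mathbb{E}\!\left[\bbraket{O|\hat\rho_{1,1}}^2\mid\bm{l}_1\right]\le \Gamma^2\|O\|_{\mathrm{sh}}^2$ by the very definition of the shadow norm as a maximum over states. For the outer term I would use $\operatorname{Var}[X_1]\le\mathbb{E}[X_1^2]=\Gamma^2\,\mathbb{E}\!\left[\bbraket{O|\hat{\bm{C}}_{\bm{l}_1}(\rho)}^2\right]\le \Gamma^2\|O\|^2$, since $\hat{\bm{C}}_{\bm{l}}(\rho)$ is a genuine density matrix (each $\bm{R}_{P,l}$ is unitary) so Hölder's inequality gives $|\bbraket{O|\sigma}|\le\|O\|_\infty$. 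A slightly tighter bookkeeping actually yields the combination $(\|O\|_{\mathrm{sh}}^2-\|O\|^2)/N_s+\|O\|^2$, by combining the two steps before bounding so that the $\mathbb{E}[\bbraket{O|\hat{\bm{C}}_{\bm{l}_1}(\rho)}^2]$ appearing with opposite signs partially cancels; only the shadow-second-moment residual is divided by $N_s$.

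Dividing by $M$ yields the first inequality. The specialization to $q$-local Pauli observables then follows by plugging in the known shadow-norm value $\|O\|_{\mathrm{sh}}^2=3^q$ from the preliminaries and $\|O\|^2=\|P\|_\infty^2=1$ for a (normalized) Pauli string. The main obstacle I anticipate is not a conceptual one but the careful bookkeeping of the $-\|O\|^2/N_s$ cross term: one must avoid the looser bound that comes from bounding $\operatorname{Var}[X_1]$ and $\mathbb{E}[\operatorname{Var}[o_{1,1}\mid\bm{l}_1]]$ independently, and instead track the cancellation between the $\mathbb{E}[\bbraket{O|\hat{\bm{C}}_{\bm{l}_1}(\rho)}^2]$ contributions so as to recover the sharper form stated in the theorem.
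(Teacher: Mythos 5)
Your proposal is correct and follows essentially the same route as the paper: reduce to a single circuit sample via i.i.d.-ness over $m$, apply the law of total variance conditioned on the quasiprobability draw, bound the outer term by $\Gamma^2\|O\|^2$ (the paper's Lemma~\ref{lem:tepai_single_variance}) and the inner second moment by $\Gamma^2\|O\|_{\mathrm{sh}}^2$, then specialize with $\|O\|_{\mathrm{sh}}^2=3^q$ and $\|O\|^2=1$. Your remark about combining the two contributions before bounding, so that the $\mathbb{E}_{\bm{l}}\bigl[\bbraket{O|\hat{\bm{C}}_{\bm{l}}(\rho)}^2\bigr]$ terms cancel with net coefficient $1-1/N_s\ge 0$, is exactly the right bookkeeping and in fact handles this step more cleanly than the paper's intermediate inequality, which if read in isolation bounds that term in the wrong direction.
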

The detailed proof is provided in Appendix~\ref{app:proof_of_variance}. 

By fixing the total number of circuit executions $N_{\text{total}}:=M N_s$, we can further simplify the variance bound as
\[ \mathrm{Var}[\langle \hat{O} \rangle] \leq \frac{\Gamma ^{2}}{N_{\text{total}}}\left(3^q-1 +N_s\right).\]
This bound reveals a favorable trade-off for experimental implementation: as long as the number of shots per circuit satisfies $N_s \lesssim 3^q$, the statistical penalty for reusing the same circuit instance is small. Consequently, we can reuse a single compiled TE-PAI circuit instance up to $N_s \ll 3^q$ times with negligible impact on the overall estimation error. This allows for significant amortization of circuit compilation and loading costs without compromising statistical precision. We numerically verify this behavior in Section~\ref{sec:classical_simulation}, demonstrating that the resulting spectrum depends mainly on the total shot budget.

\subsection{Resource comparison with prior approaches}

Energy-gap estimation can be performed by several established quantum algorithms, but many of them require heavier quantum resources or stronger assumptions than the setting considered here. Quantum phase estimation can estimate eigenphases with accuracy $\delta$, but it requires coherent controlled time evolution, ancilla registers, and long circuits whose size scales at least as $O(1/\delta)$~\cite{Kitaev1995,NielsenChuang2010}. We therefore focus on ancilla-free and control-free protocols. Variational and subspace-based approaches estimate gaps by preparing approximate low-energy states and measuring energies or effective Hamiltonian matrix elements, but their accuracy depends on the ansatz, state-preparation quality, and classical optimization, which may suffer from barren plateaus~\cite{Peruzzo2014,Cerezo2021,McClean2018BarrenPlateaus}. Another ancilla-free route is real-time spectroscopy, where one extracts gaps from the oscillation frequencies of time-dependent observables or correlation functions~\cite{Gnatenko_2022}. Our work builds on shadow spectroscopy in this real-time framework, using randomized classical-shadow measurements to estimate many observables over many time points~\cite{Chan2025AlgorithmicShadow,huang2020predicting}, and focuses on reducing the quantum cost of the required time evolution by replacing deterministic first-order Trotter circuits with TE-PAI sampled circuits.

Next, we compare our protocol with standard shadow spectroscopy combined with first-order Trotterized time evolution. Let $N_t$ be the number of time points, $N_o$ the number of observables, $q$ the locality of each observable, and $\epsilon$ the target additive accuracy for estimating each time-dependent observable. Standard shadow spectroscopy requires
\[
    N_{\rm meas}^{\rm SS}
    =
    O\!\left(3^q N_t \epsilon^{-2}\log N_o\right)
\]
randomized measurements~\cite{Chan2025AlgorithmicShadow,huang2020predicting}. This scaling describes the measurement cost, but not the cost of implementing the time evolution.

For a first-order Trotter formula, reducing the Trotter error to $\epsilon_{\rm Trot}$ requires
\[
    r
    =
    O\!\left(t^2\epsilon_{\rm Trot}^{-1}\right)
\]
Trotter steps, up to Hamiltonian-dependent constants~\cite{Suzuki1,Childs_2021}. If $G_{\rm step}$ denotes the gate count of one Trotter step, the gate count of the deterministic Trotter circuit therefore scales as
\[
    G_{\rm Trot}(t,\epsilon_{\rm Trot})
    =
    O\!\left(G_{\rm step}t^2\epsilon_{\rm Trot}^{-1}\right).
\]
Thus, suppressing the systematic Trotter bias requires increasingly deep circuits.

We first consider TE-PAI at a fixed finite Trotter step size. In this setting, TE-PAI can be viewed as a randomized implementation of the same first-order Trotterized evolution. It therefore has the same Trotter bias as the corresponding deterministic product formula, but reduces the expected gate count of each sampled circuit by probabilistically omitting rotations. If $G_{\rm samp}(t)$ denotes the expected gate count of a sampled TE-PAI circuit, then
\[
    G_{\rm samp}(t)
    \leq
    G_{\rm Trot}(t,\epsilon_{\rm Trot}).
\]
This gate-count reduction is obtained at the cost of a sampling overhead. If $\Gamma$ denotes the TE-PAI normalization factor, the effective measurement cost becomes
\[
    N_{\rm meas}^{\rm TE\mbox{-}PAI+SS}
    =
    O\!\left(
    \Gamma^2 3^q N_t \epsilon^{-2}\log N_o
    \right).
\]
Hence, at finite Trotter step size, TE-PAI trades expected gate count per circuit for a multiplicative sampling overhead $\Gamma^2$, while retaining the same first-order Trotter bias.

The stronger advantage appears when the zero-Trotter-error limit is considered. In deterministic first-order Trotterization, taking $\epsilon_{\rm Trot}\to 0$ makes the circuit cost diverge as
\[
    G_{\rm Trot}(t,\epsilon_{\rm Trot})
    =
    O\!\left(G_{\rm step}t^2\epsilon_{\rm Trot}^{-1}\right).
\]
By contrast, the infinite-step TE-PAI limit gives an unbiased estimator of the exact time evolution. In this limit, the expected number of sampled elementary rotations scales only linearly with the simulated time,
\[
    G_{\rm TE\mbox{-}PAI}(t)
    =
    O(t),
\]
up to Hamiltonian-dependent constants and the cost per sampled term. Therefore,
\[
    G_{\rm TE\mbox{-}PAI}(t)
    \ll
    G_{\rm Trot}(t,\epsilon_{\rm Trot})
    \qquad
    \mbox{as } \epsilon_{\rm Trot}\to 0.
\]
Thus, TE-PAI has a two-level advantage: at finite Trotter step size it reduces the expected gate count while preserving the same discretization bias, and in the infinite-step limit it removes the Trotter bias entirely, replacing the accuracy-dependent $O(t^2\epsilon_{\rm Trot}^{-1})$ Trotter gate scaling by an unbiased sampled evolution with linear-in-time expected gate scaling and sampling overhead $\Gamma^2$.

TE-PAI is also particularly relevant for early fault-tolerant quantum computing. In this regime, a major cost comes from implementing non-Clifford rotations, usually through magic-state resources or $T$-gate synthesis. TE-PAI directly reduces this cost by decreasing the number of rotations that must be implemented in each sampled circuit~\cite{Kiumi2025}. This feature is especially compatible with recent proposals for efficient analog rotation gates, such as STAR-magic mutation~\cite{Toshio2026STARMagicMutation}. These approaches are most effective when many logical rotations share the same angle, because the cost of preparing and reusing the corresponding rotation resource can be amortized. TE-PAI naturally produces this favorable structure: instead of requiring a large set of unrelated rotation angles, it repeatedly samples rotations from a small set of fixed angles. Thus, in early-FTQC architectures, TE-PAI can reduce both the number of implemented rotations and the effective $T$-gate or magic-state compilation overhead.
\\
\section{Experiments}\label{sec:experiments}
In this section, we validate the performance of our proposed TE-PAI shadow spectroscopy protocol through a series of experiments conducted numerically and on quantum devices.
We benchmark our method against the standard algorithmic shadow spectroscopy, which utilizes Trotterized time evolution as presented in Ref.~\cite{Chan2025AlgorithmicShadow}.
Hereafter, we refer to the spectra obtained by the standard and proposed methods as the ``Trotter-based spectrum'' and the ``TE-PAI-based spectrum,'' respectively.

\subsection{Classical simulations}
\label{sec:classical_simulation}
\begin{figure}
    \centering
\includegraphics[width=0.48\textwidth]{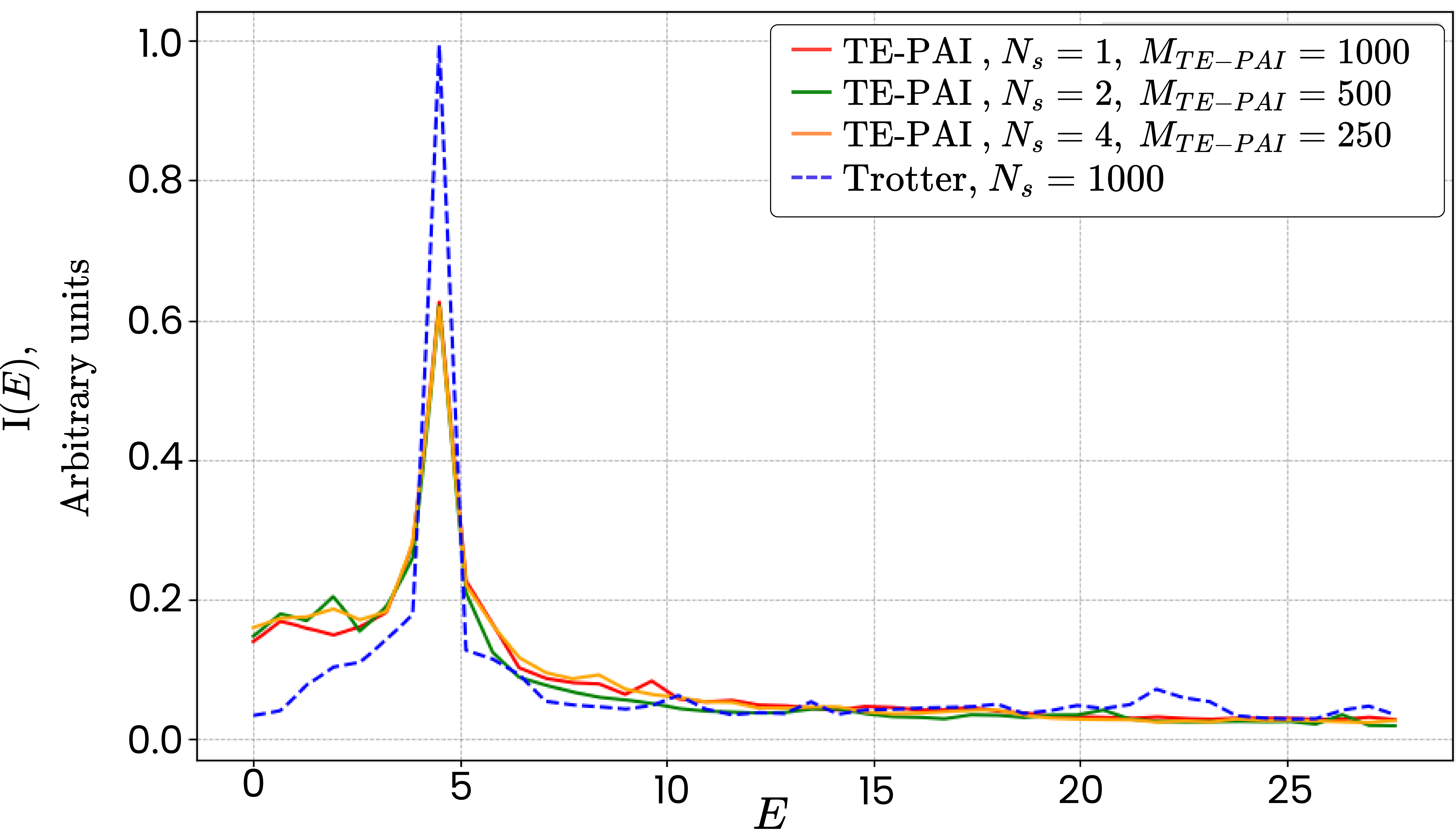}
    \caption{
        Energy spectra of the 10-qubit Heisenberg model from our TE-PAI-based method (solid lines) and the Trotter-based method (dotted line), with the total number of circuit executions fixed at 1000.
        The legend details the combinations of TE-PAI samples ($M_{\text{TE-PAI}}$) and shadow snapshots ($N_s$).
        For the same execution cost, the Trotter-based method shows a higher peak intensity, while the TE-PAI results are consistent across different configurations.
    }
    \label{fig_TE_PAI_shadow_spectro_same_circ_exec}
\end{figure}
In this subsection, we demonstrate the efficacy and robustness of TE-PAI shadow spectroscopy through classical simulations. 
We first conduct a noise-free simulation to validate that our protocol can accurately recover the energy gaps of a target Hamiltonian under ideal conditions.
Subsequently, we introduce a depolarizing noise model to assess the protocol's resilience to gate errors, a critical factor for practical applications on near-term devices. 
Through these simulations, performed on a Heisenberg model using the Qiskit Aer simulator, we compare the performance of our method against the conventional Trotter-based approach,  highlighting the advantages of TE-PAI in noisy environments.

For the noise-free simulations, we consider a 10-qubit one-dimensional Heisenberg model.
The Hamiltonian used in our experiments is
\begin{align}
    H = \sum_{i=1}^{N-1} \left( J_x X_i X_{i+1} +  J_y Y_i Y_{i+1} +  J_zZ_i Z_{i+1} \right),
\end{align}
where $X_i, Y_i, Z_i$ are the Pauli operators acting on the $i$-th qubit.  $J_x$, $J_y$ and $J_z$ are the coupling strength. They are set to 1 in our simulation. 

To isolate a specific transition energy gap, the initial state is prepared as a superposition of the ground state and the 10th excited state.
With this choice of initial state, the dominant oscillation frequency in the dynamical signal is expected to correspond to
\[
    \Delta E_{0,10}=E_{10}-E_0.
\]
For the finite system studied here, exact diagonalization gives $\Delta E_{0,10}\approx 4.3595$, as detailed in \cref{ap:Heisspectra}. This value is not the spectral gap $E_1-E_0$, but the target transition energy gap selected by our state preparation.

The time evolution is simulated for $N_t = 90$ time points with a step of $dt = 0.11$.
For both the TE-PAI and Trotter-based methods, the underlying Trotter decomposition uses $K_{\rm steps} = 650$ steps, and for TE-PAI, we set $\Delta = \pi/2^7$.
Although the simulation of the circuits is noise-free, i.e., free from gate or decoherence errors, we incorporate the statistical noise inherent in quantum measurements by taking only a single shot from each executed circuit.
To ensure a fair comparison, we fix the total number of circuit executions to 1000 at each time point.
For the standard Trotter-based method, this corresponds to simply taking $N_s = 1000$ shadow snapshots.
For the TE-PAI method, we set $M_{\text{TE-PAI}}\times N_s = 1000$ and explore configurations such as $(M_{\text{TE-PAI}}, N_s) = (1000, 1), (500, 2),$ and $(250, 4)$.
At each time point, we collect classical shadow snapshots and feed them into the algorithmic shadow spectroscopy protocol.
Within this post-processing, we estimate the expectation values of all 3-local Pauli operators from the snapshots and retain only the top 10\% of the most significant signals, which are identified using a Ljung-Box test (see also Appendix \ref{sec:shadow-spectroscopy} for detailed procedure). The Ljung–Box test and the retention of the top 10\% most significant signals constitute heuristic components of the spectral post-processing, adopted in line with prior work. A detailed assessment of the sensitivity to these choices is beyond the scope of this work; however, within the parameter ranges explored here, we do not observe qualitative changes in the extracted spectral features when these thresholds are varied moderately.

Figure \ref{fig_TE_PAI_shadow_spectro_same_circ_exec} shows the energy spectra obtained from the noise-free simulation.
As can be seen, both the Trotter-based method (blue dotted line) and our TE-PAI-based approach (solid lines) successfully identify the target transition energy gap at $\Delta E_{0,10} \approx 4.36$,  which validates the fundamental principle and correctness of our protocol.

A clear difference, however, appears in the signal intensity; for a fixed total number of 1000 circuit executions, the Trotter-based spectrum exhibits a significantly higher peak.
This reduced intensity for the TE-PAI method is an expected consequence of the additional statistical variance introduced by the quasiprobability sampling of gates.
As discussed in Ref.~\cite{Kiumi2025}, this variance necessitates a larger number of total samples to achieve the same signal-to-noise ratio as the deterministic Trotter evolution.
However, this higher sampling cost is the trade-off for the primary advantage of TE-PAI: a significant reduction in circuit depth, which provides crucial robustness against gate errors, as we will demonstrate later.

A second, notable finding from Fig. \ref{fig_TE_PAI_shadow_spectro_same_circ_exec} is the internal consistency of the TE-PAI results.
The peak intensity remains nearly identical regardless of how the 1000 total executions are distributed between the number of TE-PAI samples, $M_{\text{TE-PAI}}$, and the shadow snapshots per sample, $N_s$.
This is a non-trivial observation, as it empirically validates that the statistical precision of our hybrid protocol is determined solely by the total number of samples, not the specific strategy used to obtain them.
This finding has significant practical implications, as it offers the flexibility to choose the most convenient balance between generating many distinct evolution circuits (large $M_{\text{TE-PAI}}$) and performing more measurements on fewer circuits (large $N_s$) without affecting the final signal-to-noise ratio for a fixed total cost, see Appendix \ref{sec:Influence of the number of TE-PAI sample on the spectrum} for simulation with Large $N_s$.

Next, to assess the protocol's performance under more realistic conditions, we perform simulations on a 6-qubit Heisenberg model with a depolarizing noise model.
First, we construct the quantum circuits with a gate set consisting of {$R_X, R_Z, R_{XX}, R_{YY}, R_{ZZ}$} Pauli rotations.
We then introduce depolarizing noise after each gate, with an error probability of $p_1 = 10^{-4}$ for single-qubit gates and $p_2 = 10^{-3}$ for two-qubit gates.
For this experiment, we set the number of Trotter steps to $K_{\rm steps} = 300$ and the total number of circuit executions to 3000.
The PAI with $\Delta = \pi/2^5$ is used to reduce circuit depth.
In this setup, the Trotter-based circuits for $K_{\rm steps} = 300$ have an average depth of about 1810, while the corresponding TE-PAI circuits have an average depth of about 751. The depth is calculated as the minimum number of sequential layers of gates needed to execute the circuit, i.e., the length of the longest path through gates with qubits not sharing a layer. It was obtained in the simulation using Qiskit QuantumCircuit.depth method \cite{qiskit_depth}, version = "1.4.2". 

\begin{figure}
\includegraphics[width=0.48\textwidth]{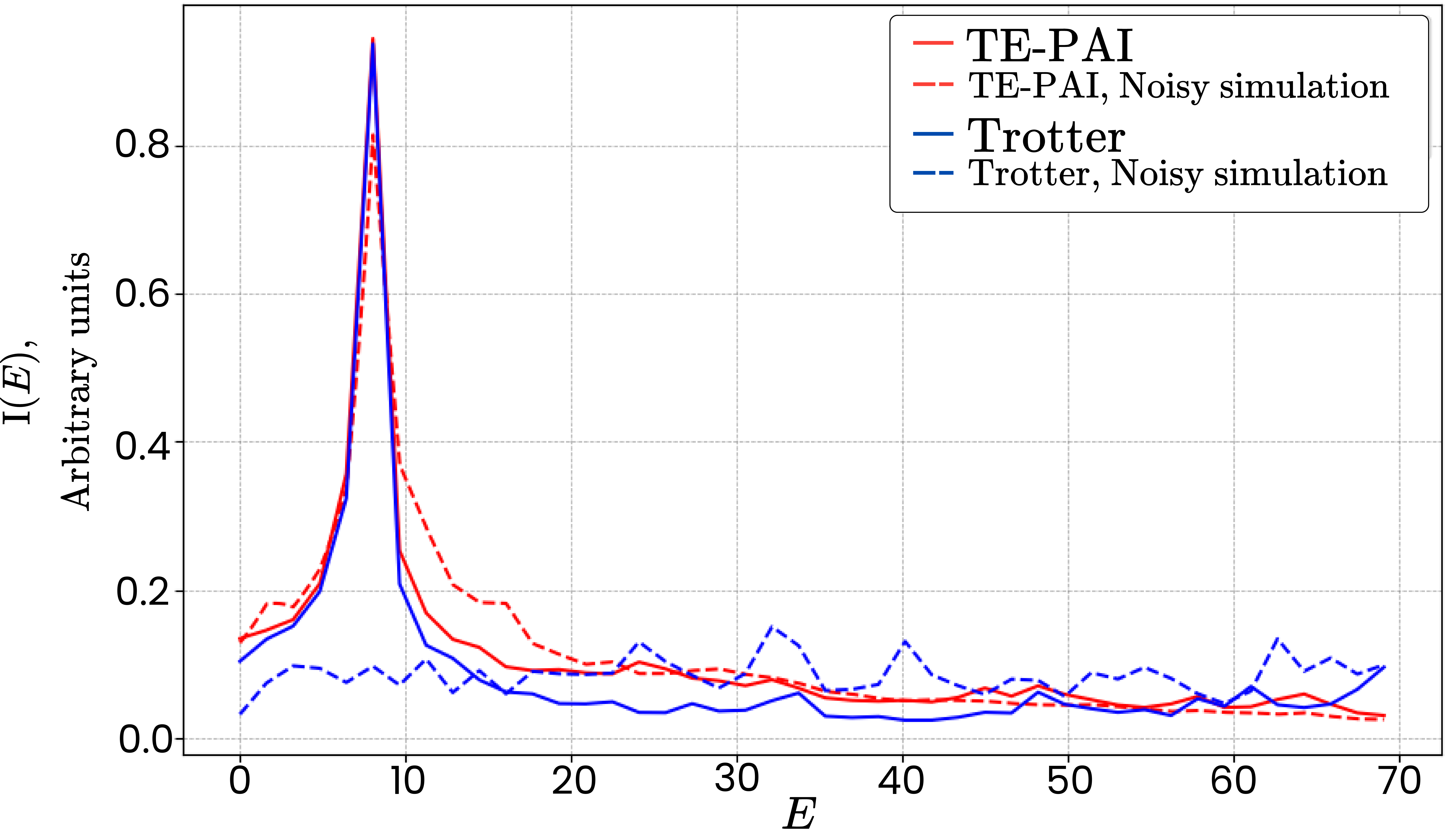}
    \caption{
        Effect of depolarizing noise on the energy spectra obtained from TE-PAI (red) and Trotter-based (blue) methods.
        Dotted lines represent the noise-free benchmarks, while solid lines show the results under noise.
        The TE-PAI spectrum demonstrates strong robustness, with its peak remaining clearly visible.
        In contrast, the Trotter-based spectrum collapses, losing nearly all spectral information.
        This highlights the practical advantage of TE-PAI's shallower circuits in noisy environments.
    }
\label{fig_TE_PAI_shadow_spectro_Heisenberg_noisy}
\end{figure}

Figure \ref{fig_TE_PAI_shadow_spectro_Heisenberg_noisy} shows the results of the noisy simulation, comparing them against the noise-free case.
The outcome reveals a stark difference in noise robustness.
The TE-PAI spectrum is only slightly degraded by the noise and the spectral peak remains clearly resolved.
In contrast, the Trotter-based spectrum collapses, as the noise almost entirely suppresses the peak.
This result confirms the superior noise resilience of the TE-PAI method, a direct consequence of its significantly shallower circuits.

\begin{figure*}
    \centering
    \begin{minipage}{0.49\textwidth}
        \centering
        \includegraphics[width=\textwidth]{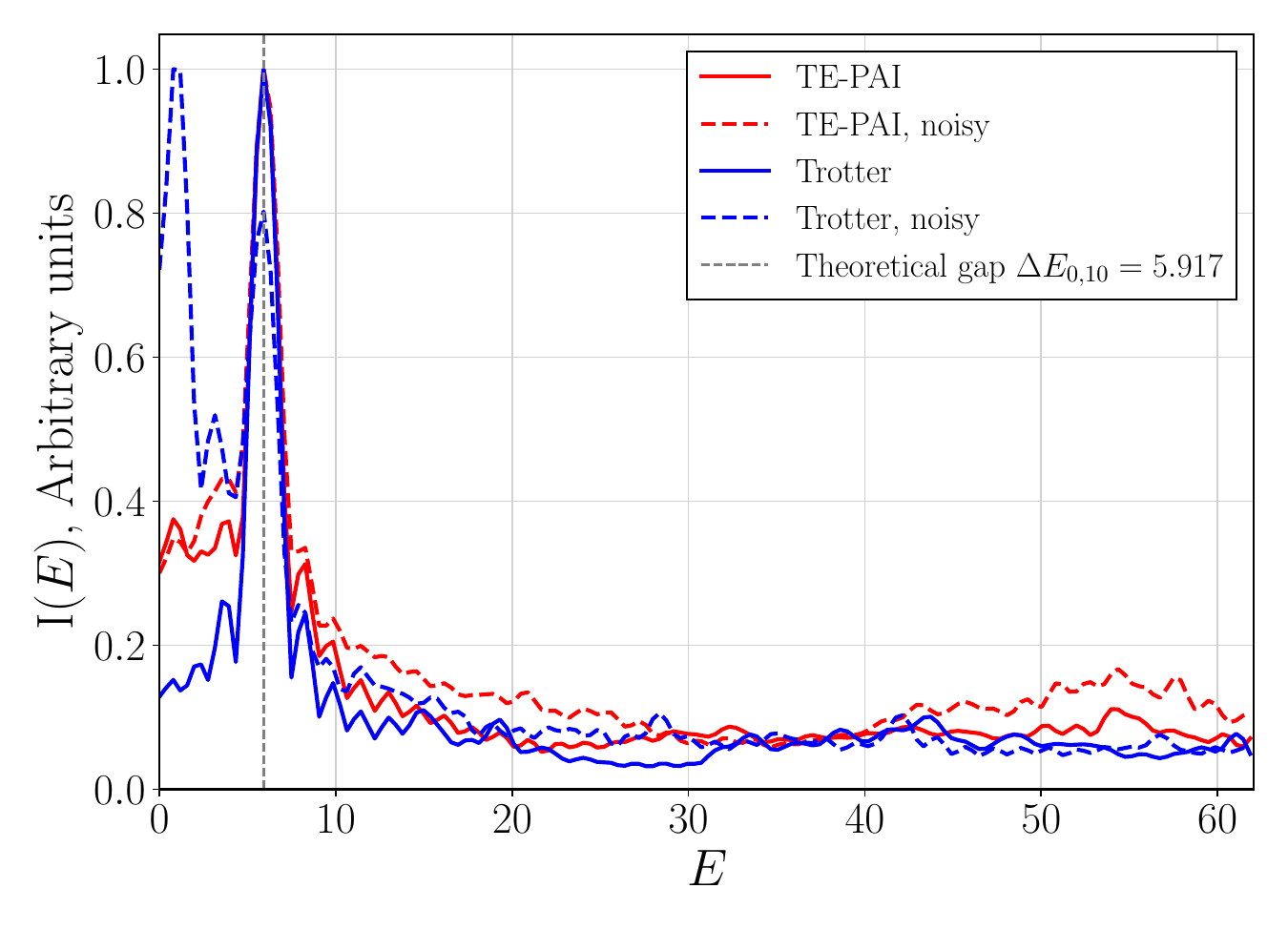}
        \\[-0.4em]
        (a) $M = 1000$ shots per time point.
    \end{minipage}
    \hfill
    \begin{minipage}{0.49\textwidth}
        \centering
        \includegraphics[width=\textwidth]{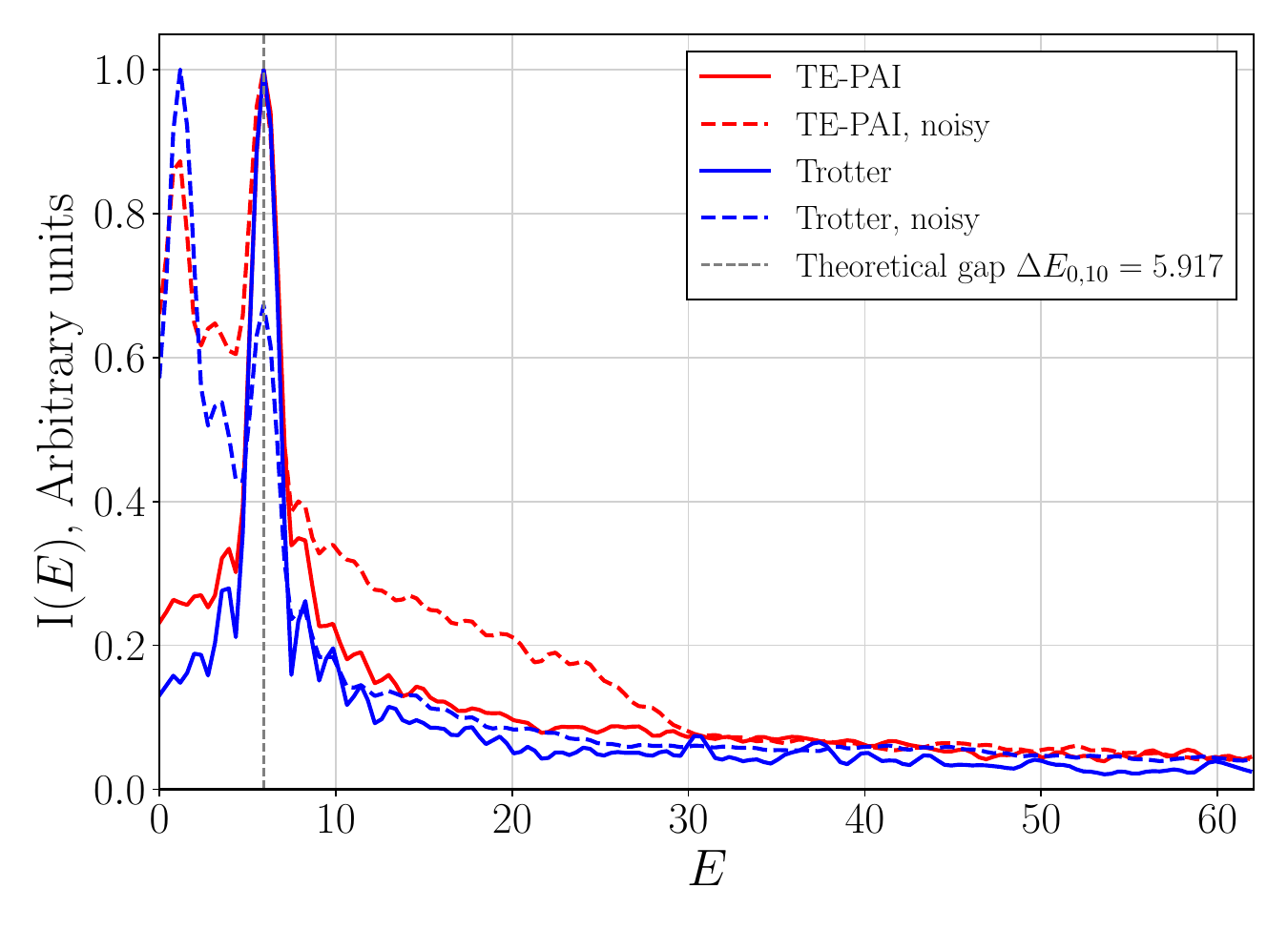}
        \\[-0.4em]
        (b) $M = 10000$ shots per time point.
    \end{minipage}
    \caption{
    Under \emph{amplitude-damping} ($T_1$) gate noise, a six-qubit Heisenberg chain ($J_x{=}J_y{=}J_z{=}1$, $\lVert H\rVert_1=15$) is prepared in $\ket{\psi_0}=\frac{1}{\sqrt{2}}\left(\ket{E_0}+\ket{E_{10}}\right)$ with target gap $\Delta E_{0,10}=5.917$ (grey dashed line). We compare TE-PAI (red) and standard Trotterization (blue), both noise-free (solid) and subject to amplitude damping (dashed; $p_1=2.5\times10^{-5}$, $p_2=2.5\times10^{-4}$). Both methods use the same constant step width $\delta t=0.008$ and $\Delta=\pi/2^{6}$, corresponding to the same $500$ time steps up to $t_{\max}=4$. They differ only in circuit depth: standard Trotterization applies $15$ gates per step, giving $15\times500=7500$ gates, whereas TE-PAI keeps only $2400$ gates on average—about $33\%$ as many—at the price of a quasiprobability sampling overhead $\Gamma\approx7.3$, which is absent for Trotterization. Increasing the number of shots from (a) $M=1000$ to (b) $M=10000$ suppresses
shot noise but does not necessarily improve the noisy spectrum. The deep Trotter
circuit fails in both panels, being dominated by a spurious low-frequency
decoherence peak. The shallower TE-PAI circuit keeps the correct gap peak
identifiable in both panels, but the larger shot budget also sharpens a
noise-induced low-frequency feature. Thus, additional sampling can make the
wrong spectral feature more visible rather than removing gate-noise-induced
errors.}
    \label{fig:both}
\end{figure*}

\begin{figure*}[t]
    \centering
\includegraphics[width=\textwidth]{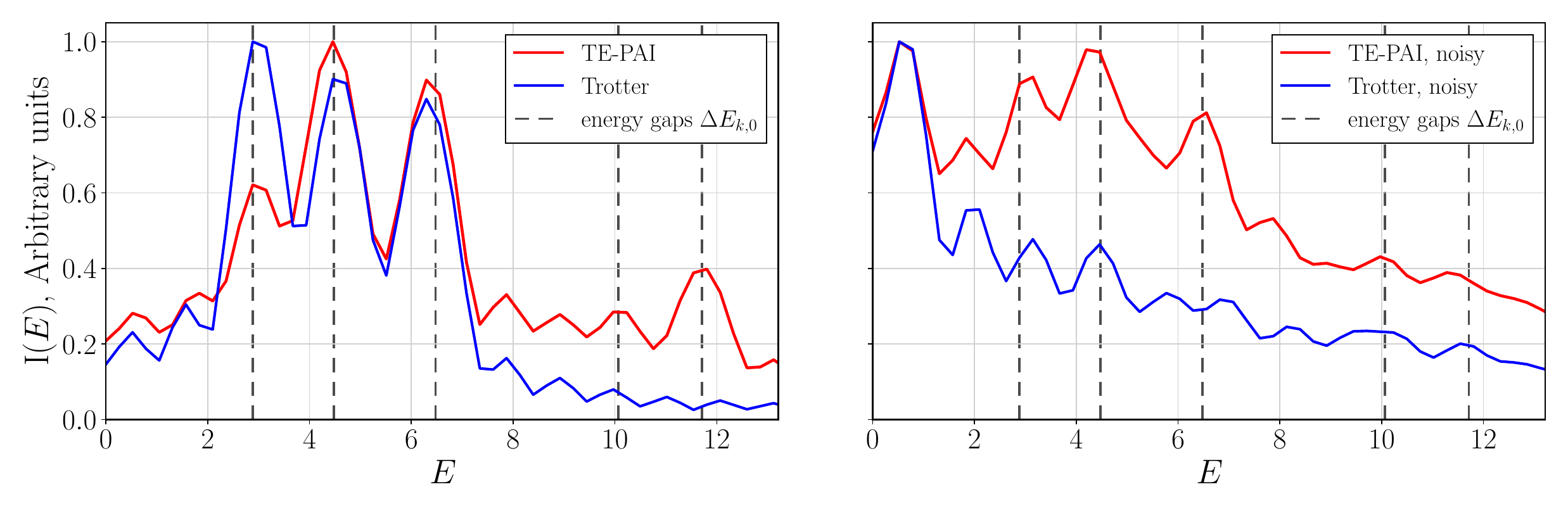}\\[-4mm]
    \caption{
    Robustness of TE-PAI shadow spectroscopy to gate noise for a \emph{multi-gap}
initial state, with $M=4000$ shots per time point. A five-qubit Heisenberg
chain ($J_x=J_y=J_z=1$, $\lVert H\rVert_1=12$) is prepared in a dominant-ground
superposition
$\ket{\psi_{\rm init}}
=
\frac{1}{\mathcal{N}}
\left(
\ket{E_0}
+
\epsilon\sum_k\ket{E_k}
\right)$
over the low-lying levels $\{1,2,4,7,9\}$, with $\epsilon=0.25$ and normalization factor $\mathcal{N}$.
The resulting dynamics contains several transition gaps $\Delta E_{k,0}$
(grey dashed lines), rather than a single frequency. The spectral intensity $I(E)$ is reconstructed using TE-PAI (red) and
Trotterization (blue). The left panel shows the noiseless spectra, while the
right panel includes depolarizing gate noise with
$p_1=4\times10^{-5}$ and $p_2=4\times10^{-4}$. Both methods use the same step width, with $\Delta=\pi/2^{6}$ and
$K_{\rm step}=700$ steps up to $t=6$. TE-PAI realizes this evolution with a $\sim\!3\times$ shallower
circuit, using $\sim\!2.9\times10^{3}$ gates on average instead of
$\sim\!8.4\times10^{3}$ Trotter gates, at a quasiprobability overhead
$\Gamma\approx10$. In the absence of gate noise, both methods resolve the low-lying transition
gaps. We consistently observe that TE-PAI gives higher-contrast peaks at several
higher-frequency gaps. Under gate noise, the shallower TE-PAI circuit preserves visible signatures
of the gap structure, especially in the higher-frequency region, whereas the
deeper Trotter circuit is strongly suppressed at the resolved gaps and drifts
toward a spurious low-frequency decoherence mode.}
    \label{fig:tepai_multigap}
\end{figure*}

To further probe the interplay between sampling and gate noise, in Figure~\ref{fig:both}, we apply TE-PAI
and standard Trotter shadow spectroscopy to a $6$-qubit Heisenberg chain
with $J_x=J_y=J_z=1$, $J=15$ Pauli terms, and $\lVert H\rVert_1=15$.
The initial state is prepared as
$\ket{\psi_{\rm init}}=\frac{1}{\sqrt{2}}
\left(
\ket{E_0}+\ket{E_{10}}
\right)$.
For this state, the ideal time signal contains a single dominant transition
frequency at $\Delta E_{0,10}=5.917$. The time-evolved state is reconstructed from $q=3$-local
Pauli shadows, and gate noise is modelled as amplitude damping, corresponding
to $T_1$ relaxation, with rates $p_1=2.5\times10^{-5}$ and
$p_2=2.5\times10^{-4}$ on one- and two-qubit gates, respectively. Both methods use the same first-order step size $\delta t=0.008$, i.e. the same
$K_{\rm step}=500$ steps up to $t_{\max}=4$ on a grid of $N_t=80$ time points, with the
TE-PAI angle $\Delta=\pi/2^{6}$.
The deterministic Trotter circuit applies
all $J$ Pauli rotations at every step, giving $7.5\times10^3$ gates.
In contrast, TE-PAI keeps only
$
\mathbb{E}[\nu_{\bm l}]\approx 2.4\times10^3
$
gates on average, about one third as many, at the cost of quasiprobability overhead $\Gamma\approx 7.3$.

For the Trotter circuit, the noisy limiting spectrum is already dominated by a
spurious low-frequency decoherence mode, and the reconstruction fails for both
shot budgets. Increasing $M$ therefore cannot restore the true transition gap;
it only reduces statistical fluctuations around an already distorted noisy
signal.

The TE-PAI reconstruction behaves differently. The correct gap peak remains
identifiable for both $M=1000$ and $M=10000$, confirming the benefit of the
shallower sampled circuits. However, increasing the shot budget does not
monotonically improve the spectrum. A larger $M$ reduces statistical fluctuations
and drives the estimator closer to the spectrum of the implemented noisy circuit
ensemble; under gate noise, this limiting spectrum is not the ideal Trotter
spectrum but the decohered spectrum generated by the noisy circuit. Consequently,
the noise-induced low-frequency feature can also become sharper, so the
reconstruction may look worse even though the statistical variance is reduced.
This suggests that the quasiprobability sampling noise, usually regarded as the
main drawback of TE-PAI, is not necessarily uniformly detrimental in this
spectroscopy task: at lower shot budgets, finite-sampling fluctuations may partly
obscure gate-noise-induced artefacts, while the true gap oscillation remains
visible due to the shallower circuits. This may also indicate that, in this instance, the genuine gap peak is relatively
robust against finite-sampling fluctuations.

Having established the protocol on a single energy gap, we next verify that its noise
robustness persists when several transition gaps must be resolved simultaneously. Multi-gap simulations are more stringent because the reconstructed spectrum must separate several weak transition peaks from cross-gap contributions and noise-induced artefacts. Following the standard practice for shadow spectroscopy~\cite{Chan2025AlgorithmicShadow},
we prepare a dominant-ground superposition
$\ket{\psi_{\rm init}}
=
\frac{1}{\mathcal{N}}
\left(
\ket{E_0}
+
\epsilon\sum_k\ket{E_k}
\right)$
over a set of low-lying eigenstates of a five-qubit Heisenberg chain
($J_x=J_y=J_z=1$, $\lVert H\rVert_1=12$), with $\epsilon=0.25$ and normalization factor $\mathcal{N}$.
For such a state, the spectral weights $c_0^{*}c_k\bra{E_0}O\ket{E_k}$ of the
ground-to-excited transitions dominate, so the time signal $S(t)$ in
\cref{eq:time_signal} carries several gap frequencies $\Delta E_{k,0}$ at once,
while the cross-gaps $\Delta E_{kl}$ with $k,l\neq0$ are suppressed as
$\mathcal{O}(\epsilon^{2})$. 

\Cref{fig:tepai_multigap} compares TE-PAI (red) and first-order Trotter (blue)
shadow spectroscopy for this state, without gate noise on the left and with
depolarizing gate noise on the right
($p_1=4\times10^{-5}$, $p_2=4\times10^{-4}$), using $q=3$-local Pauli shadows and
$M=4000$ shots per time point. The ground-dominant initial state is designed to
make the ground-to-excited transition gaps visible while keeping
excited-to-excited cross-gaps relatively suppressed.

Both methods use the same step width $\delta t$ ($\Delta=\pi/2^{6}$,
$K_{\rm steps}=700$ steps up to $t=6$) and therefore incur the same Trotter
error; the spectra are evaluated on $N_t=120$ time points. The two differ in
circuit depth: TE-PAI realises this accuracy with a $\sim\!3\times$ shallower
circuit ($\sim\!2.9\times10^{3}$ versus $\sim\!8.4\times10^{3}$ gates), at the
cost of a sampling overhead
$\Gamma\approx10$. In the noiseless
case both resolve the low-lying gaps, but with a complementary bias: standard
Trotter captures the three lowest transitions
($\Delta E_{k,0}=2.88,\,4.48,\,6.48$) most strongly while leaving the two
highest ($10.07,\,11.71$) at the floor, whereas TE-PAI under-weights the
lowest transition yet retains visible peaks at the two highest gaps, where it
outperforms Trotter. 

Under gate noise, this difference is decisive: the deeper Trotter circuit
accumulates more decoherence and its spectrum is strongly suppressed at the
resolved gaps, drifting toward a spurious low-frequency mode. In contrast, the
shallower TE-PAI circuit preserves visible signatures of the gap structure, with
higher contrast at several resolved gaps, especially in the higher-frequency
region. The advantage of TE-PAI shadow spectroscopy is therefore not specific to
the single-gap setting but carries over to the simultaneous estimation of
multiple transition gaps.

\subsection{Experiments on quantum hardware}

To demonstrate the practical viability of our protocol, we present experiments for a 20-qubit transverse-field Ising model on IBM quantum hardware.
The Hamiltonian for this model is
\begin{align}
    H = -J \sum_{k=1}^{N-1}Z_kZ_{k+1} - d\sum_{k=1}^{N}X_k.
\end{align}
The theoretical spectral gap between the ground and first excited states is $\Delta E = 2\sqrt{J^2 + d^2 - 2Jd\cos(\pi/(N+1))}$ \cite{pfeuty1970onedimensional}.
As a practical compromise between preparation efficiency and sufficient overlap with these states, we use the initial state $|{+}\cdots{+}0\rangle$ and set the parameters to $J=0.1$ and $d=2$. With this choice of initial state and the subsequent shadow-spectroscopy filtering, the dominant observed spectral feature corresponds to the ground--first-excited-state transition, while other transitions have much weaker spectral weight in the processed signal.
We set the number of Trotter steps to $K_{\rm steps}=115$ and use a TE-PAI angle of $\Delta = \pi/2^5$, which limits the average circuit depth to approximately 600.
For the time evolution, we use $N_t=80$ steps with an interval of $dt=0.037$.
At each time point, we perform 3000 total circuit executions, configured as $(M_{\text{TE-PAI}}, N_s) = (3000, 1)$.
From the resulting shadow snapshots, the algorithmic shadow spectroscopy process is performed in exactly the same manner as the classical simulation.
We run these experiments on the \textit{ibm\_kobe} and \textit{ibm\_kingston} quantum computers, whose detailed properties are provided in Appendix \ref{app:Hardware properties}.

Figure \ref{fig_Hardware simulation} shows the spectra from our experiments on quantum hardware.
The data from both the \textit{ibm\_kobe} (red line) and \textit{ibm\_kingston} (green line) devices correctly resolve the theoretical energy gap, confirming our protocol's viability under realistic noise conditions.
We observe a higher peak from \textit{ibm\_kobe}, which is consistent with its lower average gate error rate as detailed in Appendix \ref{app:Hardware properties}.
Crucially, the TE-PAI spectrum is more pronounced than the one from the conventional Trotter-based method (blue dotted line) run on the same device.
This result provides direct experimental evidence that the shallower circuits produced by TE-PAI lead to superior robustness against hardware noise.
It thereby marks a significant scale-up from the 6-qubit hardware demonstration in the original work \cite{Chan2025AlgorithmicShadow}.

\begin{figure}
    \includegraphics[width=0.5\textwidth]{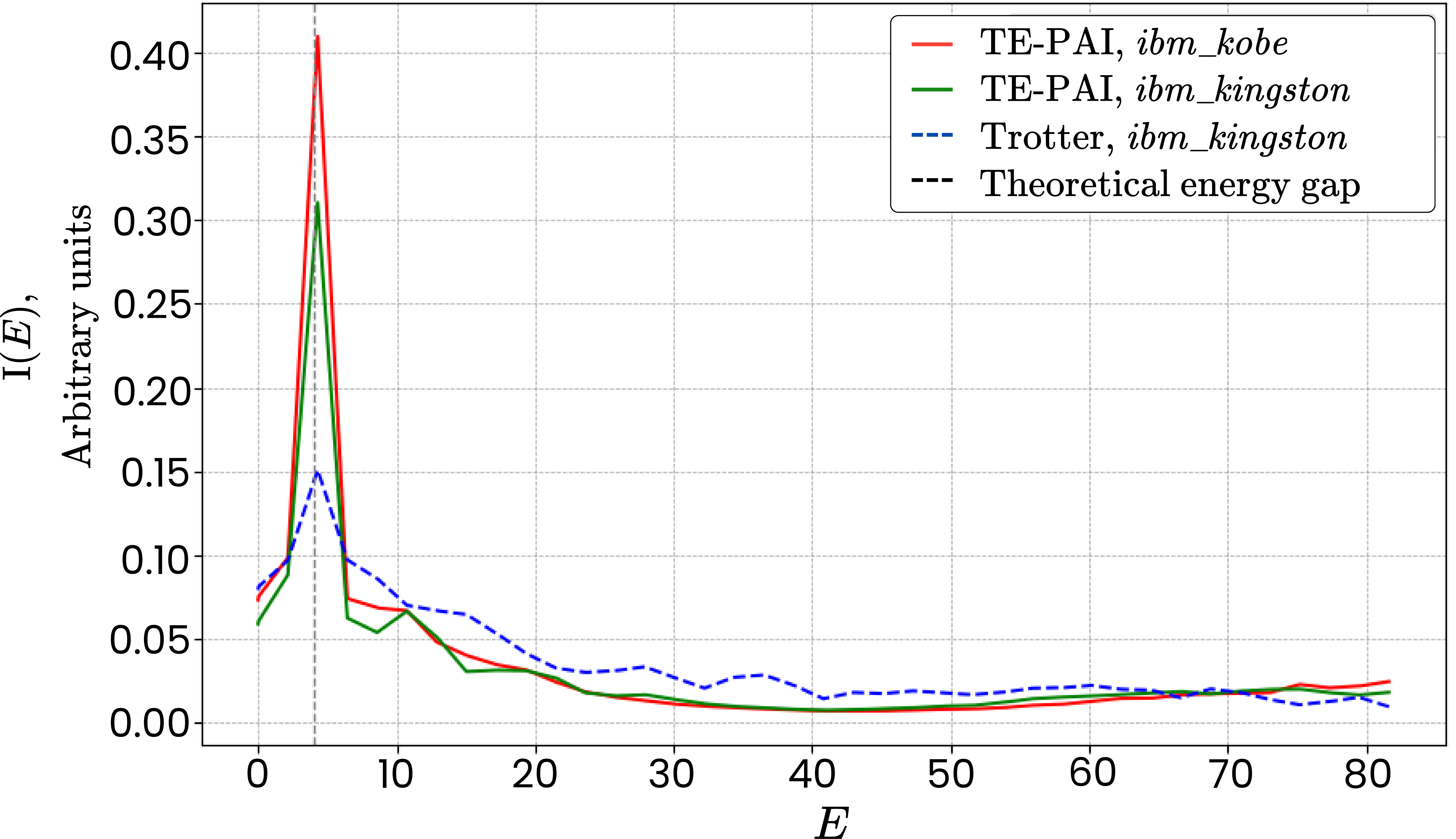}
    \caption{Spectra of a 20-qubit Transverse Ising Hamiltonian measured on quantum hardware. The red and green lines show a TE-PAI-based spectrum performed respectively on \textit{ibm\_kobe} and \textit{ibm\_kingston}. The blue dotted line shows a Trotter-based spectrum performed on \textit{ibm\_kingston}. The grey dotted vertical line indicates the exact energy gap.}
    \label{fig_Hardware simulation}
\end{figure}

\section{Conclusion}\label{sec:conclusion}

In this work, we introduced TE-PAI shadow spectroscopy, a hybrid quantum-classical protocol that integrates the randomized time evolution of TE-PAI with the measurement framework of algorithmic shadow spectroscopy.
We have theoretically established that this combination yields an unbiased estimator for spectral properties and experimentally validated its performance.
Our noise-free classical simulations confirmed that the protocol correctly identifies energy gaps, while also highlighting an inherent trade-off: the quasiprobability sampling in TE-PAI introduces statistical variance that requires a larger number of total circuit executions to achieve the same signal intensity as the standard Trotter-based method. However, this cost is made up for by the protocol's primary advantage, which we demonstrated in noisy simulations: a significant reduction in circuit depth leads to superior robustness against gate errors.
We also found that, under gate noise, a moderate shot budget can preserve the correct peak while suppressing spurious noise-induced features. In contrast, increasing the shot budget can sharpen these spurious features and thereby worsen the spectral estimate. This suggests that the sampling overhead of TE-PAI is not merely a drawback in this task: by avoiding over-resolution of gate-noise-induced artifacts, it can sometimes help robust energy-gap estimation. Also, our additional multi-gap simulations further showed that TE-PAI can resolve several transition gaps and tends to identify higher-frequency gaps more clearly than the Trotter-based approach.
This key benefit was further confirmed through experiments on 20-qubit quantum hardware, where our method successfully resolved the target energy gap and outperformed the conventional approach, demonstrating its practical viability for spectroscopy in a significantly larger scale experiment than previously performed.

Let us briefly address the practical time complexity of our protocol compared to standard approaches. While TE-PAI utilizes $M_{\text{TE-PAI}}$ unique circuit instances rather than a single reused circuit, the efficiency of modern control stacks prevents this from causing a significant increase in wall-clock execution time. Current quantum runtime environments (such as Qiskit Runtime) support circuit batching, allowing the full ensemble of $M_{\text{TE-PAI}}$ circuits to be submitted and executed within a single job, thereby avoiding the latency overhead associated with multiple job submissions. Furthermore, our variance analysis (\cref{thm:tepai_shadow_variance}) and numerical results (\cref{fig_TE_PAI_shadow_spectro_same_circ_exec}) demonstrate that the estimator remains unbiased even when reusing a single compiled circuit instance for $N_s$ shots. This allows for the compilation and loading costs to be amortized over $N_s$ measurements. By optimizing the ratio of unique circuits ($M$) to shots-per-circuit ($N_s$) for a fixed total budget, one can balance the statistical requirements against the hardware's specific compilation and loading latencies, rendering the total wall-clock time comparable to standard Trotterized evolution.

Looking forward, TE-PAI shadow spectroscopy opens several promising research avenues.
The natural trade-off between circuit depth and sampling overhead constitutes a tunable design parameter that can be optimized in accordance with the specific performance characteristics of the underlying quantum hardware. In particular, quantum devices with lower error rates can reliably support deeper circuits, thereby reducing the required sampling overhead. Conversely, on hardware with higher error rates, circuit depth can be deliberately curtailed at the expense of increased sampling overhead, enabling a more favorable balance between noise resilience and statistical accuracy.
A natural next step would be to combine our protocol with other error mitigation techniques, such as readout error correction or zero-noise extrapolation, to further suppress the impact of device imperfections.
Furthermore, the demonstrated scalability and noise resilience make our method a strong candidate for tackling challenging problems in quantum chemistry and condensed matter physics, where deep circuits have been a major bottleneck.
The flexibility in distributing the sampling budget between the number of TE-PAI circuits and shadow snapshots also suggests possibilities for hardware-aware co-design of experiments, potentially leading to even more efficient spectral analysis on near-term quantum computers.

\noindent\textbf{Data availability}: The simulation code used in this work is available online at: \url{https://github.com/hugopgs/TE-PAI_shadow_spectroscopy}.

\section*{Acknowledgments}
Y.M. is supported by JST SPRING under Grant Number JPMJSP2138. This work is partially supported by JST ASPIRE Japan Grant Number JPMJAP2319, the project JPNP20017 funded by the New Energy and Industrial Technology Development Organization (NEDO), JST COI-NEXT program Grant No. JPMJPF201.
K.M. is supported by JST FOREST Grant No. JPMJFR232Z, JSPS KAKENHI Grant No. 23H03819, 24K16980 and JST CREST Grant No. JPMJCR24I4.
C.K. is supported by JST PRESTO, Japan, Grant Number JPMJPR25F1, JSPS KAKENHI, Grant Number JP26K17052. This work, as part of the ITI 2021–2028 program of the University of Strasbourg, CNRS and Inserm, was supported by IdEx Unistra (ANR-10-IDEX-0002), and by SFRI STRAT’US project (ANR-20-SFRI-0012) and EUR QMAT ANR-17-EURE-0024 under the framework of the French Investments for the Future Program.
B.K. thanks UKRI for the Future Leaders Fellowship Theory to Enable Practical Quantum Advantage (MR/Y015843/1)
and EPSRC for funding through the project Software Enabling Early Quantum Advantage (SEEQA, EP/Y004655/1).
This research was funded in part by UKRI (MR/Y015843/1).
For the purpose of Open Access, the author has applied a CC BY public copyright licence
to any Author Accepted Manuscript version arising from this submission.

\appendix
\crefalias{section}{appendix}

\section{Product Formulas}

Throughout this work, we employ the first-order Trotter product formula. The associated Trotter error, arising from the discretization of time evolution, is an intensively studied topic \cite{Childs_2021}; here, we briefly summarize the relevant results. Let $U_{H,t}^{(K)}$ denote the Trotter circuit defined in \cref{eq:trotter_circuit} with $K$ Trotter steps. The spectral norm error between the exact time evolution operator $U_{H,t}$ and the approximation $U_{H,t}^{(K)}$ is bounded by:
\begin{equation*}\left\| U_{H,t} - U_{H,t}^{(K)} \right\|\le \frac{t^2\|c\|_T}{2K},\end{equation*}

where $\|c\|_T := \sum_{\gamma<\beta}\left\|[h_\gamma P_\gamma,h_\beta P_\beta]\right\|$ is the commutator norm of $H$. The error in the estimated expectation value induced by the Trotter approximation is bounded by the operator error and the spectral norm of the observable:
\begin{align*}
    \epsilon_T
    &:=\left| S(t) - \langle O \rangle^{(K)}_{t} \right|\\
    &\le 2\|O\|\,\left\| U_{H,t} - U_{H,t}^{(K)} \right\| \\
    &\le \frac{t^2 \|O\|\|c\|_T }{K},
\end{align*}
where $\langle O \rangle^{(K)}_{t} := \bbraket{O|\bm{U}_{H,t}^{(K)}(\rho)}$ denotes the expectation value obtained by evolving $\rho$ with the $K$-step Trotter circuit. As $K$ increases, the Trotter approximation converges to the exact evolution, so the Trotter error vanishes in the limit $K\to\infty$ (the bound above scales as $O(t^2/K)$); moreover, this is a worst-case upper bound, and the empirical error for specific instances is often much smaller \cite{Childs_2021}.

Other than the Trotter error, finite sampling introduces statistical error. To estimate $\langle O\rangle^{(K)}_{t}$ on hardware, we sample $N$ independent measurement outcomes $o_1,\dots,o_N$ from repeated executions of the Trotter circuit $U_{H,t}^{(K)}$ and use the sample mean $\hat{o}:=\frac{1}{N}\sum_{n=1}^N o_n$.
Then $\mathbb{E}\!\left[\hat{o}\right]=\langle O\rangle^{(K)}_{t}$ and $\Var\!\left(\hat{o}\right)\le \|O\|^2/N,$
leading to a sample complexity scaling as
\begin{equation*}
  N = \mathcal{O}\!\left( \frac{\|O\|^2\log(1/\delta)}{\epsilon^2} \right),
  \label{equ:M_sample}
\end{equation*}
to achieve additive error \(\epsilon\) with failure probability \(\delta\).

\section{Quasiprobability sampling}\label{app:quasiprobability}
The theoretical results presented in \cref{sec:formal_analysis} are not limited to TE-PAI but apply generally to any quasi-probability sampling method. Here, we briefly review this underlying framework.

In this approach, quantum channels that are difficult to implement exactly are approximated by randomly sampling from a set of simpler operations. This strategy is mathematically grounded in quasi-probability decompositions, which serve as the foundation for many randomized compiling techniques. Prominent examples include probabilistic error cancellation (PEC) for stochastic noise mitigation \cite{Temme2017ErrorMitigation,Endo2018PracticalQEM,Jnane2024QuantumErrorMitigated}, probabilistic angle interpolation (PAI) for coherent or synthesis-error mitigation \cite{Koczor2024ProbabilisticInterpolation,Koczor_2024,Kiumi2025}, and circuit cutting \cite{cutting1,cutting2,cutting3}.

Suppose a target channel $\bm{C}$ can be decomposed as a linear combination 
\[\bm{C}=\sum_{d=1}^{D} a_d \bm{C}_d,\]
 where $a_d \in \mathbb{R}$ are real coefficients and $\{\bm{C}_d\}$ is a set of efficiently implementable basis channels. We define the normalization factor $\gamma(\bm{C})=\sum_{d=1}^{D} |a_d|$ and assign sampling probabilities $p_d=|a_d|/\gamma(\bm{C})$.

To simulate $\bm{C}$, we sample an index $l \in \{1,\dots, D\}$ from the distribution $\{p_l\}$, apply the channel $\bm{C}_l$, and weight the measurement outcome by $\gamma(\bm{C})\mathrm{sign}(a_l)$. This weighting is performed via classical post-processing; specifically, for any observable $O$ and state $\rho$, the expectation value satisfies:
\begin{align*}
    \bbraket{O|\bm{C}(\rho)}
    = \gamma(\bm{C})
    \mathbb{E}\left[
    \mathrm{sign}(a_l)
    \bbraket{O|\bm{C}_l(\rho)}
    \right],
\end{align*}
where the expectation is taken over the random choice of $l$. Consequently, the resulting random estimator is unbiased, satisfying
 \[\mathbb{E}[\gamma(\bm{C})\mathrm{sign}(a_l)\bm{C}_l]=\bm{C}.\]

 Next, we consider a sequential channel $\bm{C}=\bm{C}_K\cdots \bm{C}_1$ with each $\bm{C}_k$ decomposed as 
 \[\bm{C}_k=\sum_{d=1}^D a_{k,d}\bm{C}_{k,d}\] 
 where $\bm{C}_{k,d}$ are efficiently implementable channels.  
Let $l_k\in \{1,\ldots,D\}$ be a random variable drawn according to $p_{d}=|a_{k,d}|/\gamma(\bm{C}_k)$, where $\gamma(\bm{C}_k)=\sum_d |a_{k,d}|$. 
Define the random channel 
\begin{align*}
    \hat{\bm{C}}_{\bm{l}} :=\bm{C}_{K,l_K} \cdots \bm{C}_{2,l_2}\bm{C}_{1,l_1}
\end{align*}
with the classical post-processing weight $\Gamma_{\bm{l}} = \prod_{k=1}^K \gamma(\bm{C}_{k})\mathrm{sign}(a_{k,l_k})$. For any initial state $\rho$, applying the sampled channel sequence produces an unbiased estimator of $\bm{C}(\rho)$ as  
\[\mathbb{E}[\Gamma_{\bm{l}}\hat{\bm{C}}_{\bm{l}}(\rho)] = \bm{C}(\rho).\]
While this re-weighting guarantees unbiasedness, it scales the estimator's variance by approximately $\Gamma^2:=\prod_{k=1}^K \gamma(\bm{C}_{k})^2$, thereby requiring the number of samples to increase by a factor of $\Gamma^2$ to maintain a fixed estimation error.

\section{Higher-order TE-PAI: gate count and sampling overhead}
\label{app:higher-order-te-pai}

We analyze how the expected gate count and quasiprobability overhead change when TE-PAI is applied to a higher-order product formula. We first focus on the symmetric second-order formula. The first-order product formula is
\begin{equation*}
    U_{H,t}^{(1,K)}
    =
    \left[
        \prod_{j=1}^{J}e^{-ih_jP_j\delta t}
    \right]^K,
    \qquad
    \theta_j=2h_j\delta t .
\end{equation*}
The symmetric second-order product formula is
\begin{equation*}
    U_{H,t}^{(2,K)}
    =
    \left[
        \prod_{j=1}^{J}e^{-ih_jP_j\delta t/2}
        \prod_{j=J}^{1}e^{-ih_jP_j\delta t/2}
    \right]^K .
\end{equation*}
Thus, each first-order rotation of angle $\theta_j$ is replaced by two rotations of angle $\theta_j/2$. For a TE-PAI decomposition of a rotation angle $\theta$, the sampling overhead is given by
\[
    \gamma(\theta):=|a_1(\theta)|+|a_2(\theta)|+|a_3(\theta)|,
\]
where $a_1,a_2,a_3$ are defined in Eq.~\eqref{eq:te-pai-decomp-prelim-app}. For $|\theta|\leq\Delta$,
\begin{equation*}
    \gamma(\theta)
    =
    \cos|\theta|
    +
    \tan\!\left(\frac{\Delta}{2}\right)\sin|\theta|,
\end{equation*}
and hence
\begin{equation}
    \log\gamma(\theta)
    =
    |\theta|\tan\!\left(\frac{\Delta}{2}\right)
    -
    \frac{\theta^2}{2}
    \sec^2\!\left(\frac{\Delta}{2}\right)
    +
    O(|\theta|^3).
    \label{eq:log-gamma-expansion-app}
\end{equation}

The probability that a rotation of angle $\theta$ is sampled as a non-identity gate is
\begin{align*}
    p_{\neq I}(\theta)
    &=
    \frac{|a_2(\theta)|+|a_3(\theta)|}{\gamma(\theta)}
    \\
    &=
    \frac{3-\cos\Delta}{2\sin\Delta}|\theta|
    -
    B_\Delta \theta^2
    +
    O(|\theta|^3),
\end{align*}
where
\begin{equation*}
    B_\Delta
    :=
    \frac{1}{4}
    +
    \frac{3-\cos\Delta}{2\sin\Delta}
    \tan\!\left(\frac{\Delta}{2}\right).
\end{equation*}

Let $\nu_{\bm l}$ denote the number of non-identity gates in a sampled TE-PAI circuit. For first-order TE-PAI,
\begin{align*}
    \mathbb{E}\!\left[\nu_{\bm l}^{(1)}\right]
    &=
    K\sum_{j=1}^{J}p_{\neq I}(\theta_j)
    \\
    &=
    \frac{3-\cos\Delta}{\sin\Delta}\|H\|_1t
    -
    \frac{4B_\Delta t^2}{K}\sum_{j=1}^{J}h_j^2
    +
    O(K^{-2}).
\end{align*}
For symmetric second-order TE-PAI,
\begin{align*}
    \mathbb{E}\!\left[\nu_{\bm l}^{(2)}\right]
    &=
    2K\sum_{j=1}^{J}
    p_{\neq I}\!\left(\frac{\theta_j}{2}\right)
    \\
    &=
    \frac{3-\cos\Delta}{\sin\Delta}\|H\|_1t
    -
    \frac{2B_\Delta t^2}{K}\sum_{j=1}^{J}h_j^2
    +
    O(K^{-2}).
\end{align*}
Thus, first-order and symmetric second-order TE-PAI have the same leading large-$K$ expected gate count. At finite $K$, the second-order expected gate count is slightly larger, but it approaches the common asymptotic value with a smaller $O(1/K)$ correction.

The quasiprobability overhead is determined by the squared accumulated normalization factor. For the first-order formula,
\begin{equation*}
    \Gamma_1^2
    =
    \left[
        \prod_{j=1}^{J}\gamma(\theta_j)
    \right]^{2K}.
\end{equation*}
Using Eq.~\eqref{eq:log-gamma-expansion-app} and $\theta_j=2h_jt/K$, we obtain
\begin{equation*}
    \log\Gamma_1^2
    =
    4t\|H\|_1
    \tan\!\left(\frac{\Delta}{2}\right)
    -
    \frac{4t^2}{K}
    \sec^2\!\left(\frac{\Delta}{2}\right)
    \sum_{j=1}^{J}h_j^2
    +
    O(K^{-2}).
\end{equation*}
For the symmetric second-order formula,
\begin{equation*}
    \Gamma_2^2
    =
    \left[
        \prod_{j=1}^{J}
        \gamma\!\left(\frac{\theta_j}{2}\right)^2
    \right]^{2K},
\end{equation*}
and hence
\begin{equation*}
    \log\Gamma_2^2
    =
    4t\|H\|_1
    \tan\!\left(\frac{\Delta}{2}\right)
    -
    \frac{2t^2}{K}
    \sec^2\!\left(\frac{\Delta}{2}\right)
    \sum_{j=1}^{J}h_j^2
    +
    O(K^{-2}).
\end{equation*}
Therefore,
\begin{equation*}
    \frac{\Gamma_2^2}{\Gamma_1^2}
    =
    \exp\!\left[
        \frac{2t^2}{K}
        \sec^2\!\left(\frac{\Delta}{2}\right)
        \sum_{j=1}^{J}h_j^2
        +
        O(K^{-2})
    \right].
\end{equation*}
Thus, first-order and symmetric second-order TE-PAI have the same leading large-$K$ sampling overhead. At finite $K$, the second-order overhead is slightly larger, but it is closer to the common asymptotic value because its $O(1/K)$ correction is smaller in magnitude.

More generally, if a product formula is written as
\begin{equation*}
    U_{H,t}^{(p,K)}
    =
    \prod_{r=1}^{L_{p,K}} R_{P_r,\theta_r},
\end{equation*}
then the leading TE-PAI costs are governed by the total absolute rotation angle:
\begin{equation*}
    \log \Gamma_p^2
    =
    2\tan\!\left(\frac{\Delta}{2}\right)
    \sum_{r=1}^{L_{p,K}}|\theta_r|
    +
    O\!\left(
        \sum_{r=1}^{L_{p,K}}\theta_r^2
    \right),
\end{equation*}
and
\begin{equation*}
    \mathbb{E}\!\left[\nu_{\bm l}^{(p)}\right]
    =
    \frac{3-\cos\Delta}{2\sin\Delta}
    \sum_{r=1}^{L_{p,K}}|\theta_r|
    +
    O\!\left(
        \sum_{r=1}^{L_{p,K}}\theta_r^2
    \right).
\end{equation*}
For the symmetric second-order formula, this total absolute rotation angle is unchanged at leading order because each full rotation is replaced by two half-angle rotations. In contrast, higher-order Suzuki formulas of order four and above can contain negative or enlarged fractional time steps, which increase the total absolute rotation angle and can therefore increase the TE-PAI gate count and sampling overhead. We note that a continuous-time version of TE-PAI has recently been studied in Ref.~\cite{dai2026structure}, where the limiting ensemble can be interpreted as a Dyson-series-based sampling scheme with a natural leading $l_1$ rate. From this viewpoint, fourth- and higher-order Suzuki formulas are not expected to converge to this optimal leading cost, because their negative or enlarged fractional time steps increase the total absolute rotation angle.

\section{Algorithmic shadow spectroscopy}\label{sec:shadow-spectroscopy}

Let $H$ be an $N$-qubit Hamiltonian with eigenstates $\ket{E_\alpha}$ for $\alpha = 1, \dots, 2^N$ and corresponding eigenvalues $E_\alpha$.
To obtain energy differences, $E_\alpha - E_\beta$, algorithmic shadow spectroscopy uses expectation values of a set of observables $\{O_i\}_{i=1}^{N_o}$ consisting of all $q$-local Pauli operators, evaluated with respect to time-evolved states $\kket{\rho(t)} = \bm{U}_{H,t} \kket{\rho_{\rm init}}$ for some initial state $\kket{\rho_{\rm init}}$ \cite{Chan2025AlgorithmicShadow}.

The algorithm is as follows.
Let $\{t_n\}_{n=1}^{N_t}$ be a set of time points and define $S_i(n) := \bbraket{O_i | \rho(t_n)}$.
\begin{enumerate}
    \item Gather shadow snapshots of $\rho(t_n)$ for each $t_n$ using $N_s$ copies of $\rho(t_n)$. This is the only step performed on quantum hardware; the following steps are classical.

    \item Estimate $\hat{S}_i(n)$ from the snapshots.

    \item Construct a data matrix $D \in \mathbb{R}^{N_o \times N_t}$ with entries
          $D_{i, n}  = \frac{\hat{S}_i(n) - \mu_i}{\sigma_i}$
          and $\mu_i$, $\sigma_i$ are the mean and standard deviation of $\{\hat{S}_i(n)\}_{n=1}^{N_t}$, respectively.

    \item Apply the Ljung-Box test to each row of $D$ to remove signals indistinguishable from noise \cite{Ljung_box_test1, Ljung_box_test2}.

    \item Compute the correlation matrix
          \begin{equation}
              C = D D^\top \in \mathbb{R}^{N_o \times N_o},
          \end{equation}
          and extract the top $c$ dominant eigenvectors $\{v_1, v_2, \dots, v_c\}$ corresponding to the largest eigenvalues \cite{BRONSON2021185}.

    \item Compute the spectral cross-correlation matrix
          \begin{equation}
              x_{kl}(m) = \sum_{n=1}^{N_t - m - 1} v_k(n + m) \, v_l(n),
          \end{equation}
          and perform a discrete Fourier transform (DFT) of $x_{kl}(m)$ with respect to the time lag index $m$. Denote the resulting quantity by $X_{kl}(m)$, where $m$ now represents the frequency index.

    \item Compute the largest singular value $\lambda(m)$ of the matrix $X(m)$. Peaks in $\lambda(m)$ correspond to the transition energy gaps of the Hamiltonian.
\end{enumerate}

\section{Formal Analysis of the TE-PAI shadow Estimator}\label{app:experimental_algorithm}

Algorithm~\ref{alg:te-pai-shadow-experimental} shows the detailed pseudocode of the practical implementation of our protocol  used in the experiments. 
As discussed in the main text, this implementation is designed to generate one time-evolution circuit via the TE-PAI process and reuses it with $N_s$ different randomized measurement bases to obtain $N_s$ snapshots for each TE-PAI sampled circuit. \Cref{thm:tepai_unbiased_shadow,thm:tepai_shadow_variance} establish that this modified implementation yields an unbiased estimator and provides an explicit upper bound on the variance.

Before proving \cref{thm:tepai_unbiased_shadow,thm:tepai_shadow_variance}, we first present a lemma that upper bounds the variance of an expectation value for a single quasiprobability random circuit.

\begin{lemma}\label{lem:tepai_single_variance}
    The expectation value of any observable $O$ under a quasiprobability random circuit yields an unbiased estimator of the true expectation value:
     \[\mathbb{E}[\Gamma_{\bm{l}}\bbraket{O|\hat{\bm{C}}_{\bm{l}} (\rho)}]=\bbraket{O|\bm{C}(\rho)}.\]
     Additionally, the variance is upper bounded as follows:
\begin{align*}
\operatorname{Var}\left[ \Gamma_{\bm{l}}\bbraket{O|\hat{\bm{C}}_{\bm{l}} (\rho)}\right]
\leq \Gamma ^{2}\| O\|^{2}.
\end{align*}

\end{lemma}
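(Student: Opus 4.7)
The plan is to treat the two claims in sequence, leaning on the quasiprobability identities already established in \cref{app:quasiprobability}. For the unbiasedness statement, I would start from the identity $\mathbb{E}[\Gamma_{\bm{l}}\hat{\bm{C}}_{\bm{l}}(\rho)] = \bm{C}(\rho)$ derived in Appendix B for a sequential quasiprobability decomposition, and simply pair both sides with $\bbra{O}$ in the Hilbert--Schmidt inner product. Linearity of expectation, together with linearity of $\bbraket{\cdot|\cdot}$ in its second argument, pushes the inner product inside the expectation and gives the claimed identity in one line. No additional structure of the basis channels is required for this step.

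For the variance, the key observation is that $\Gamma_{\bm{l}}$ is random only through its sign: by construction $\Gamma_{\bm{l}} = \Gamma \prod_k \mathrm{sign}(a_{k,l_k})$, so $\Gamma_{\bm{l}}^2 = \Gamma^2$ is a deterministic constant. Using $\operatorname{Var}[X] \leq \mathbb{E}[X^2]$ and pulling the constant out, the problem reduces to bounding
\begin{align*}
\mathbb{E}\!\left[\bbraket{O|\hat{\bm{C}}_{\bm{l}}(\rho)}^{2}\right] \leq \|O\|^{2}.
\end{align*}
Here I would exploit the fact that each basis channel $\bm{C}_{k,d}$ in the TE-PAI decomposition is a unitary Pauli rotation (namely $I$, $R_{P,\phi}$, or $R_{P,\pi}$), so in particular CPTP; hence $\hat{\bm{C}}_{\bm{l}}(\rho)$ is a bona fide density matrix for every realization of $\bm{l}$. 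The bound $|\bbraket{O|\sigma}| = |\mathrm{Tr}(O^{\dagger}\sigma)| \leq \|O\|$ for any density matrix $\sigma$ (a direct consequence of Hölder's inequality with the trace norm $\|\sigma\|_{1}=1$) then applies pointwise in $\bm{l}$ and survives the expectation.

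Combining the two steps gives $\operatorname{Var}[\Gamma_{\bm{l}}\bbraket{O|\hat{\bm{C}}_{\bm{l}}(\rho)}] \leq \Gamma^{2}\|O\|^{2}$ as claimed. The one subtlety worth flagging---and arguably the only non-bookkeeping part of the argument---is the reliance on the CPTP property of the sampled basis channels to invoke the pointwise operator-norm bound; for the TE-PAI setting considered here this is automatic, but the statement, if extended to more general quasiprobability schemes with non-CPTP basis elements, would need to be modified (for example by replacing $\|O\|$ with a basis-dependent quantity that accounts for the maximum $\|\bm{C}_{k,d}(\rho)\|_{1}$). I do not expect any delicate step beyond this observation; the remainder is direct algebraic manipulation.
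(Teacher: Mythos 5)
Your proposal is correct and follows essentially the same route as the paper's proof: unbiasedness by pairing $\mathbb{E}[\Gamma_{\bm{l}}\hat{\bm{C}}_{\bm{l}}(\rho)]=\bm{C}(\rho)$ with $\bbra{O}$, and the variance bound via $\operatorname{Var}[X]\le\mathbb{E}[X^2]$, the determinism of $\Gamma_{\bm{l}}^2=\Gamma^2$, and the pointwise bound $|\bbraket{O|\sigma}|\le\|O\|$ for a density matrix $\sigma$. Your remark that this last step relies on the sampled basis channels being CPTP is a fair and worthwhile observation, but it matches the paper's standing assumption that the $\bm{C}_{k,d}$ are genuine quantum channels, so it does not change the argument.
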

\begin{proof}
    The unbiasedness directly follows from the unbiasedness of the TE-PAI decomposition:
    \begin{align*}
    \mathbb{E}[\Gamma_{\bm{l}}\bbraket{O|\hat{\bm{C}}_{\bm{l}} (\rho)}]
    &=
    \bbraket{O|\mathbb{E}[\Gamma_{\bm{l}}\hat{\bm{C}}_{\bm{l}} ](\rho )}
    \\
    &=\bbraket{O|\bm{C}(\rho)}
\end{align*}
    Write $\hat{o} := \Gamma_{\bm{l}}\bbraket{O|\hat{\bm{C}}_{\bm{l}} (\rho)}$ for brevity. The variance is given by
$
\operatorname{Var} [\hat{o} ] =\mathbb{E}\left[\hat{o}^{2}\right] -\mathbb{E}[\hat{o}]^{2} =\mathbb{E}\left[\hat{o}^{2}\right] -\bbraket{O|\bm{C}(\rho)}^{2},
$
and we now upper bound the first term as
\begin{equation*}
\begin{aligned}
\mathbb{E}\left[\hat{o}^{2}\right] & =\sum _{\bm{l}} p_{\bm{l}} \times \left[\Gamma_{\bm{l}}\bbraket{O|\bm{C}_{\bm{l}} (\rho)}\right]^{2}\\
 & =\Gamma ^{2}\sum _{\bm{l}} p_{\bm{l}} \times \bbraket{O|\bm{C}_{\bm{l}} (\rho )}^{2}\\
 & \leq \Gamma ^{2} \| O\|^{2}
\end{aligned}
\end{equation*}
Above we have used that $|\bbraket{O|\rho }|\leq \| O\|$ for any density matrix $\rho$ and $\sum _{\bm{l}} p_{\bm{l}} =1$.
\end{proof}

\subsection{Proof of Theorem~\ref{thm:tepai_unbiased_shadow}}\label{app:proof_of_unbiasedness}
This modified implementation preserves the unbiased nature of the estimator.
\begin{proof}
    The goal is to estimate the true expectation value of an observable $O$, defined with respect to the original channel
    $\bbraket{O|\bm{C}(\rho)}.$ Our final estimator, $\hat{\langle O\rangle}$, is the average over all measurement outcomes.
    More specifically, let $o_{m,s}$ be the outcome from the $m$-th quasiprobability random circuit and the $s$-th shadow snapshot, that is,
    $$
        o_{m,s} = \Gamma_{\bm{l}_m} \bbraket{O | \hat{\rho}_{m,s}},
    $$
    where $\Gamma_{\bm{l}_m}$ is the weight accumulated from the $m$-th quasiprobability sample and $\hat{\rho}_{m,s}$ is the corresponding classical shadow snapshot defined in \eqref{eq:shadow-snapshot-prelim}.
    The estimator is constructed as
    $$
        \hat{\langle O\rangle} = \frac{1}{M N_s} \sum_{m=1}^{M} \sum_{s=1}^{N_s} o_{m,s}.
    $$
    The expectation value of this estimator, $\mathbb{E}[\hat{\langle O \rangle}]$, is taken over both the probabilistic choice of $\bm{l}_m$ and the shadow's random choice of measurement bases. We can analyze the expectation in steps, from the inside out using the {\it law of total expectation}.
    First, the inner expectation is a conditional expectation over the shadow measurement outcomes, given a fixed quasiprobability random circuit $\bm{C}_{\bm{l}_m}$. We denote this expectation by $\mathbb{E}_{\text{sh}}$ and we have
    \begin{align}
        \mathbb{E}_{\text{sh}} \left[ \frac{1}{N_s} \sum_{s=1}^{N_s} o_{m,s} \right] 
        &=\frac{1}{N_s} \sum_{s=1}^{N_s}\mathbb{E}_{\text{sh}} \left[  o_{m,s} \right] \nonumber
        \\
        \label{eq:inner_expectation}
        &= \bbraket{O|\Gamma_{\bm{l}_m} \bm{C}_{\bm{l}_m}(\rho)}.
    \end{align}
    Next, we consider the outer average over the $M$ samples.
    The Quasiprobability sampling protocol is constructed to be an unbiased estimator of the exact time-evolution channel, meaning
    $\mathbb{E}_{\bm{l}}\left[ \Gamma_{\bm{l}_m} \bm{C}_{\bm{l}_m} \right] = \bm{C},$
    By the linearity of expectation and trace, the full expectation value is
    \begin{align*}
        \mathbb{E}[\hat{\langle O \rangle}]
        &=\mathbb{E}_{\bm{l}}\left[\frac{1}{M}\sum_{m=1}^{M}\mathbb{E}_{\text{sh}} \left[ \frac{1}{N_s} \sum_{s=1}^{N_s} o_{m,s}  \right]\right]\\
         & = \frac{1}{M}\sum_{m=1}^{M}\mathbb{E}_{\bm{l}} \left[\bbraket{O|\Gamma_{\bm{l}_m} \bm{C}_{\bm{l}_m}(\rho)}\right] \\
         & = \bbraket{O|\mathbb{E}_{\bm{l}}[\Gamma_{\bm{l}_m} \bm{C}_{\bm{l}_m}](\rho)}   \\
         & = \bbraket{O|\bm{C}(\rho)}   \\
    \end{align*}
\end{proof}

\subsection{Proof of Theorem~\ref{thm:tepai_shadow_variance}}\label{app:proof_of_variance}

\begin{proof}
Let $Y_m$ denote the estimator obtained from the $m$-th sampled circuit, averaged over its $N_s$ shadow snapshots:
\begin{equation*}
    Y_m = \frac{1}{N_s} \sum_{s=1}^{N_s} o_{m,s}.
\end{equation*}
Since the $M$ circuit samples are independent and identically distributed, the total variance is $\text{Var}[\langle \hat{O} \rangle] = \frac{1}{M} \text{Var}[Y_m]$. We apply the Law of Total Variance to $Y_m$, conditioning on the choice of the random circuit $\bm{C}_{\bm{l}_m}$. Let $\mathbb{E}_{\bm{l}}$ and $\text{Var}_{\bm{l}}$ denote the expectation and variance over the circuit distribution given by the quasiprobability sampling, and $\mathbb{E}_{\text{sh}}$ and $\text{Var}_{\text{sh}}$ denote the conditional expectation and variance over the shadow measurement randomness, given a fixed random circuit $\bm{C}_{\bm{l}_m}$. The total variance decomposes as
\begin{equation*}
    \text{Var}[Y_m] = \text{Var}_{\bm{l}} \left( \mathbb{E}_{\text{sh}}[Y_m] \right) + \mathbb{E}_{\bm{l}} \left[ \text{Var}_{\text{sh}}(Y_m ) \right].
\end{equation*}
Conditioned on a specific circuit $\bm{C}_{m}$, the shadow estimator is unbiased for that circuit's observable value as \cref{eq:inner_expectation}. From \cref{lem:tepai_single_variance}, the first term is upper bounded as
\begin{align} \label{eq:var_circuit}
    \text{Var}_{\bm{l}} \left(\mathbb{E}_{\text{sh}}[Y_m] \right) \le \Gamma ^{2}\| O\|^{2}-\bbraket{O|\bm{C}(\rho)}^{2}.
\end{align}
Conditioned on a fixed random circuit $\mathcal{C}_{\bm{l}_m}$, $Y_m$ is the average of $N_s$ independent snapshots. The variance of a single weighted snapshot $\Gamma_m \text{Tr}(O \hat{\rho})$ is determined by the second moment of the shadow estimator as $\mathbb{E}_{\text{sh}} [ \bbraket{O|\hat{\rho}}^2] \le  \|O\|_{\text{sh}}^2.$ Thus, the variance of a single snapshot given $\mathcal{C}_m$ is:
\begin{align*}
    \text{Var}_{\text{sh}}(\Gamma_{\bm{l}_m} \bbraket{O|\hat{\rho}_{m,s}}) 
    &\le \Gamma^2 (\|O\|_{\text{sh}}^2 - \bbraket{O|\bm{C}_{\bm{l}_m}(\rho)}^2).
\end{align*}
Since $Y_m$ averages $N_s$ such snapshots, its conditional variance is scaled by $1/N_s$:
\begin{equation*}
    \text{Var}_{\text{sh}}(Y_m) = \frac{\Gamma^2}{N_s}  (\|O\|_{\text{sh}}^2 - \bbraket{O|\bm{C}_{\bm{l}_m}(\rho)}^2).
\end{equation*}
Taking the expectation over circuits yields the second term of the total variance:
\begin{align} \label{eq:var_shadow}
    \mathbb{E}_{\bm{l}} \left[ \text{Var}_{\text{sh}}(Y_m) \right]
    & = \frac{\Gamma^2}{N_s} \left(  \|O\|_{\text{sh}}^2 - \mathbb{E}_{\bm{l}}[\bbraket{O|\bm{C}_{\bm{l}_m}(\rho)}^2] \right).\nonumber\\
    & \le \frac{\Gamma^2}{N_s}\left( \|O\|_{\text{sh}}^2 - \|O\|^2 \right) .
\end{align}
Summing Eq.~\eqref{eq:var_circuit} and Eq.~\eqref{eq:var_shadow} gives $\text{Var}[Y_m]$:

\begin{equation*}
    \text{Var}[Y_m] \le \frac{\Gamma^2}{N_s}  \|O\|_{\text{sh}}^2 + \left( 1 - \frac{1}{N_s} \right) \Gamma^2\|O\|^2.
\end{equation*}
Finally, dividing by $M$ yields the variance of the total estimator $\langle \hat{O} \rangle$:
\begin{align*}
    \operatorname{Var}[\langle \hat{O} \rangle] &\leq \Gamma ^{2}\left(\frac{\| O\|^2 _{\text{sh}}-\| O\|^2}{M N_s} +\frac{\| O\|^2}{M}\right).
\end{align*}
\end{proof}

\begin{figure}[t]
\begin{algorithm}[H]
\caption{Experimental implementation of shadow snapshots
\label{alg:te-pai-shadow-experimental}}
\begin{algorithmic}[1]
    \Statex \textbf{Input:} Hamiltonian $H$, initial state $\rho_{\text{init}}$, total time $t$, Trotter steps $K_{\rm steps}$, angle $\Delta$, number of TE-PAI samples $M_{\text{TE-PAI}}$, number of shadow snapshots per sample $N_s$
    \Statex \textbf{Output:} A collection of $N=M_{\text{TE-PAI}} \times N_s$ shadow snapshot tuples $(\Gamma, \{U_n\}_{n=1}^N, \{b_n\}_{n=1}^N)$

    \State Initialize an empty list of snapshots $\mathcal{S}$
    \For{$m = 1$ to $M_{\text{TE-PAI}}$}
        \State Initialize TE-PAI weight $\Gamma_m \leftarrow 1$
        \State Initialize an empty quantum circuit for evolution $\mathcal{C}_{\text{evol}}$
        \State Set $\delta = t / K_{\rm steps}$
        \For{$k = 1$ to $K_{\rm steps}$}
            \For{$j=1$ to $J$}
                \State Compute $\theta_j= 2h_j\delta$ and coefficients $a_1, a_2, a_3$
                \State Set $\gamma = |a_1|+|a_2|+|a_3|$.
                \State Sample $d$ with probability $|a_d|/\gamma$
                \State Update $\Gamma_m \leftarrow \Gamma_m \times \gamma \times \mathrm{sign}(a_d)$
                \State Append the corresponding gate to $\mathcal{C}_{\text{evol}}$
            \EndFor
        \EndFor
        \For{$s = 1$ to $N_s$}
            \State Create a copy: $\mathcal{C}_{\text{total}} \leftarrow \mathcal{C}_{\text{evol}}$
            \For{$n = 1$ to $N$}
                \State Sample a Clifford gate $U_n$ uniformly at random
                \State Append $U_n$ to qubit $n$ in $\mathcal{C}_{\text{total}}$
            \EndFor
            \State Apply $\mathcal{C}_{\text{total}}$ to $\rho_{\rm init}$ and measure to get bitstring $b$
            \State Append the tuple $(\Gamma_m, \{U_n\}, b)$ to  $\mathcal{S}$
        \EndFor
    \EndFor
    \State \Return $\mathcal{S}$
\end{algorithmic}
\end{algorithm}
\end{figure}

\section{Energy Spectrum of the Heisenberg Hamiltonian} \label{ap:Heisspectra}
In condensed matter physics, the isotropic Heisenberg model is typically defined using spin-1/2 operators $\mathbf{S}_i = \frac{1}{2}\boldsymbol{\sigma}_i$, where $\boldsymbol{\sigma} = (X, Y, Z)$ are the Pauli matrices \cite{takahashi1999, mattis1981}. The standard Hamiltonian is:\begin{equation}\mathcal{H} = J \sum_{i=1}^{N-1} (\mathbf{S}i \cdot \mathbf{S}{i+1}).\end{equation}Our simulation employs Pauli operators directly, implying a scaling of $\boldsymbol{\sigma}_i \cdot \boldsymbol{\sigma}_{i+1} = 4(\mathbf{S}_i \cdot \mathbf{S}_{i+1})$. Consequently, all eigenvalues are scaled by a factor of 4 relative to the classic Bethe Ansatz results \cite{bethe1931}. For an infinite chain, the ground state energy $E_0$ follows the Hulthén integral \cite{hulthen1938}, while low-lying excitations are described by des Cloizeaux-Pearson magnon triplets \cite{descloizeaux1962}. In our finite $N=10$ system with open boundary conditions and $J=1$, exact diagonalization reveals a discrete spectrum governed by $SU(2)$ symmetry. The ground state and the 10th energy level, which resides in the third excited manifold, yield the following numerical results: 
$E_{Ground state}= -17.0321$ and $E_{10^{th}}= -12.6726$ 
resulting in the target transition energy gap for our state preparation, $\Delta E_{0,10}=E_{10}-E_0\approx 4.3595$.
This gap represents the energy required to sustain multi-magnon excitations within the quantized momentum space of a 10-qubit chain.
We emphasize that this is not an adjacent-level spacing. It is the transition energy gap selected by preparing a superposition of the ground state and the 10th energy level, and therefore it appears as the dominant frequency in the time-dependent signal.

\section{Influence of the Shadow size on the TE-PAI spectrum}
\label{sec:Influence of the number of TE-PAI sample on the spectrum}
We now examine how shadow size influences the resulting spectrum. Fig. \ref{fig_TE_PAI_shadow_spectro_Different_Ns} presents three TE-PAI spectra obtained using a fixed number of TE-PAI samples across varying shadow sizes, $N_s \in \{1, 10, 100\}$. All simulations utilized the parameters defined in Sec. \ref{sec:classical_simulation}. While all three spectra successfully resolve the target transition energy gap, the peak intensity increases significantly with larger shadow sizes. However, a larger shadow size imposes a substantial computational overhead, as the total number of circuit executions scales with $M_{\text{TE-PAI}} \times N_s$.
In this study, we fix the shadow size to $N_s = 1$  and only change the number of TE-PAI samples.

\begin{figure}
    \centering
    \includegraphics[width=0.5\textwidth]{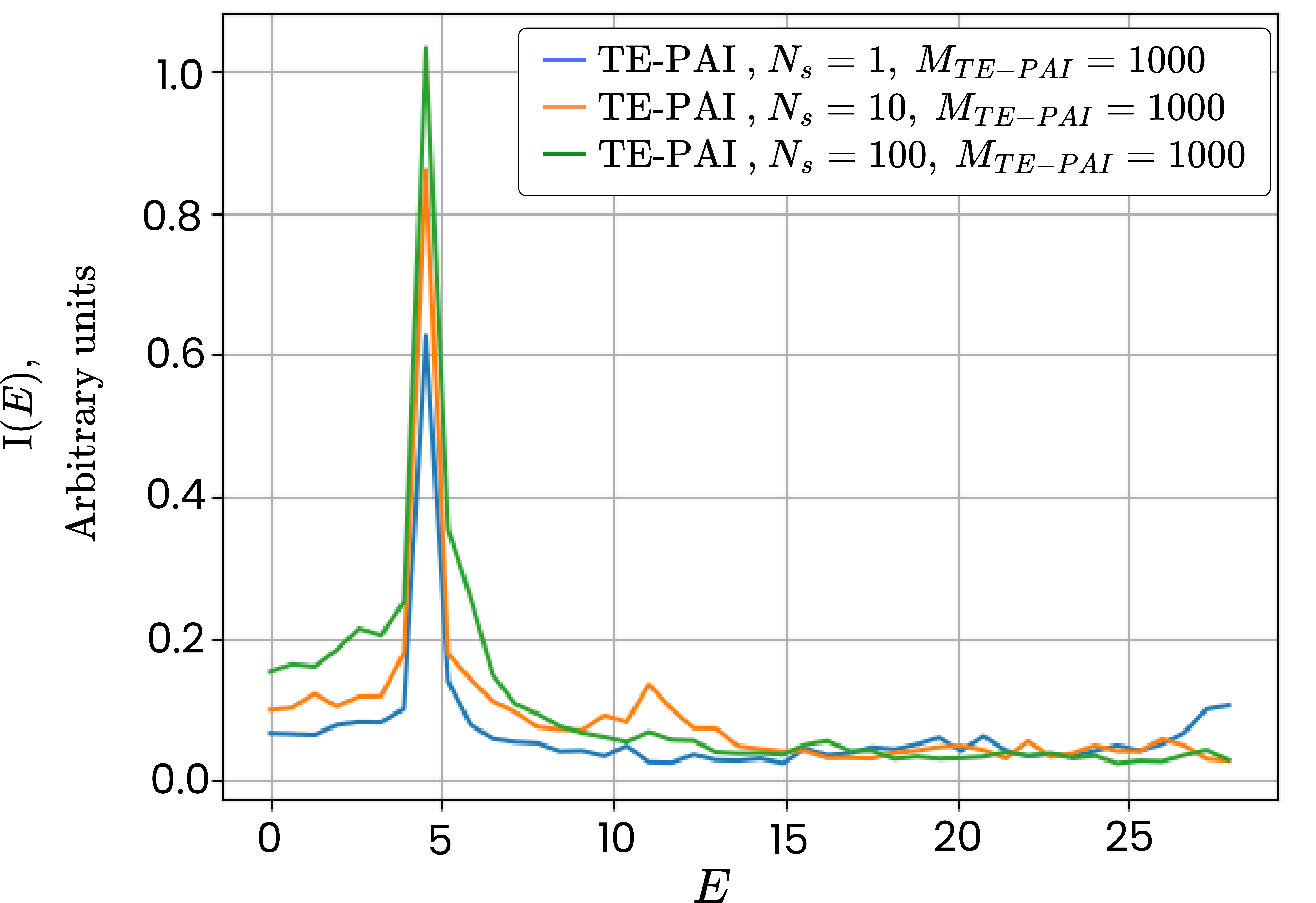}
    \caption{Comparison of TE-PAI spectra computed with different shadow sizes. The spectra were generated using a fixed number of TE-PAI samples while varying the shadow size $N_s \in \{1, 10, 100\}$, corresponding to the blue, orange, and green spectra, respectively. While all cases resolve the target transition energy gap, increasing $N_s$ significantly enhances the peak intensity. Simulation parameters match those detailed in Sec. \ref{sec:classical_simulation}.}
    \label{fig_TE_PAI_shadow_spectro_Different_Ns}
\end{figure}

\section{Hardware properties}
\label{app:Hardware properties}

\Cref{Tab:backendproperties} summarizes the hardware characteristics of the two IBM Quantum backends used in this study: \textit{ibm\_kobe} and \textit{ibm\_kingston}. Both devices are based on the Heron r2 architecture and provide identical native basis gate sets, consisting of \texttt{cz}, \texttt{id}, \texttt{rx}, \texttt{rz}, \texttt{rzz}, \texttt{sx}, and \texttt{x}. This extensive gate set allows for greater flexibility in circuit design, particularly by enabling simulation of more complex Hamiltonians natively, without requiring excessive transpilation. Despite architectural similarities, the backends differ slightly in their two-qubit gate fidelity. \textit{ibm\_kobe} exhibits a lower two-qubit gate error, both in the best-case scenario ($1.03\times 10^{-3}$) and in layered execution ($2.68\times 10^{-3}$), compared to \textit{ibm\_kingston}, which shows $1.06\times 10^{-3}$ and $3.64\times 10^{-3}$, respectively. This improved fidelity makes \textit{ibm\_kobe} a better candidate for experiments requiring two-qubit gates. Although the differences in fidelity are marginal, \textit{ibm\_kobe}'s slightly better two-qubit performance and its alignment with low transpilation depth makes it our preferred choice for hardware-based simulations involving entanglement-heavy operations.

\begin{table}[t]
\centering
\caption{Comparison of processor performance metrics between \textit{ibm\_kobe} and \textit{ibm\_kingston}.}
\label{Tab:backendproperties}
\begin{ruledtabular}
\begin{tabular}{lcc}
Processor & \textit{ibm\_kobe} & \textit{ibm\_kingston} \\
2Q error (best)         & $1.03\times10^{-3}$ & $1.06\times10^{-3}$ \\
2Q error (layered)      & $2.68\times10^{-3}$ & $3.64\times10^{-3}$ \\
CLOPS                  & $2.5\times10^{5}$  & $2.5\times10^{5}$  \\
Median CZ error         & $1.858\times10^{-3}$ & $2.088\times10^{-3}$ \\
Median SX error         & $2.383\times10^{-4}$ & $2.379\times10^{-4}$ \\
\end{tabular}
\end{ruledtabular}
\end{table}

The logical-to-physical mapping is performed based on an optimization strategy prioritizing long $T_1$/$T_2$ and low error rates. 

\Cref{Mapping_kobe} and \Cref{Mapping_kingston} show diagram of the physical mapping of qubits.



\begin{figure}[!h]
    \includegraphics[width=0.45\textwidth]{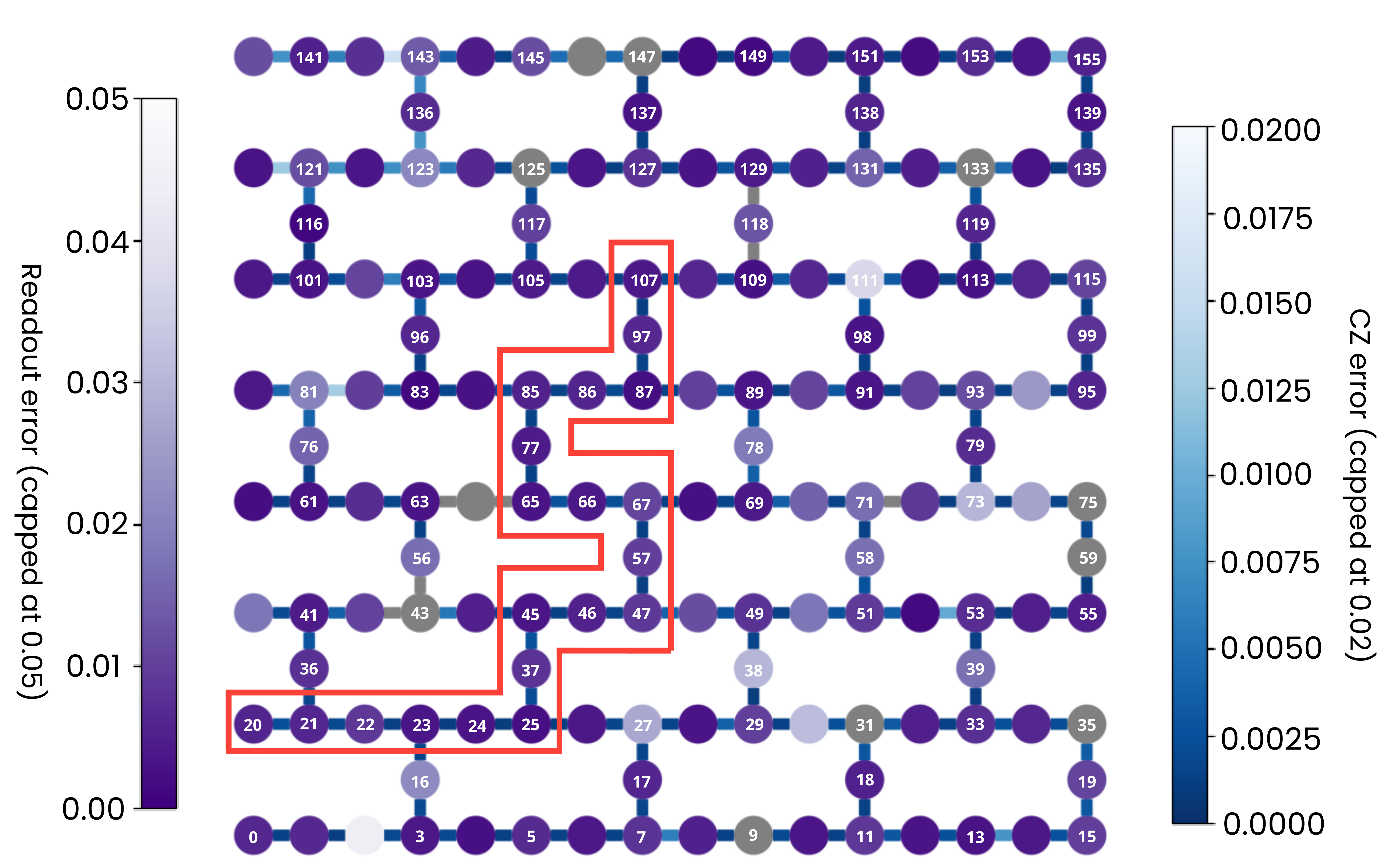}
    \caption{Architecture and error landscape of the \textit{ibm\_kobe} quantum processor.
Each node represents a qubit, with its color indicating the average readout assignment error. Each edge represents a physical connection between two qubits, and its color encodes the average controlled-Z (CZ) gate error. Grey nodes or edges indicate that the corresponding error exceeds a predefined threshold, set to 0.05 for the readout error and 0.02 for the CZ error.
The qubits and connections highlighted by the red outline denote the specific subset of the device that is utilized in our experiments.
}
    \label{Mapping_kobe}
\end{figure}

\begin{figure}[!h]
    \includegraphics[width=0.45\textwidth]{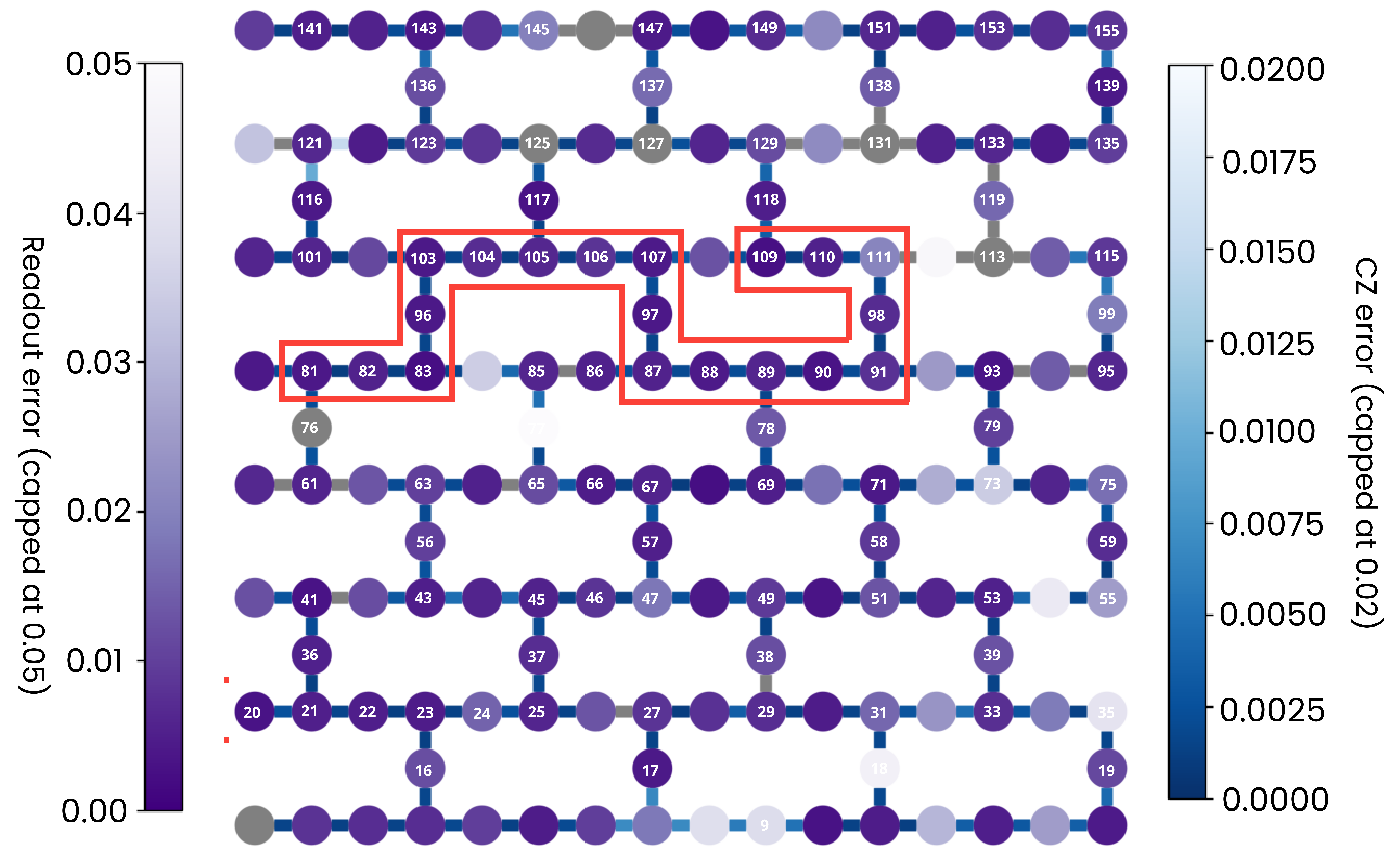}
    \caption{Same as Fig.~\ref{Mapping_kobe}, but for the \textit{ibm\_kingston} quantum processor.}
    \label{Mapping_kingston}
\end{figure}
\newpage

\end{document}